\newcommand{\E}[1]{\textrm{E}\left[#1\right]}
\newcommand{\var}[1]{\textrm{var}\left\{#1\right\}}
\newcommand{\Ebr}[1]{\textrm{E}\left\{#1\right\}}
\newcommand{\vct}[1]{\boldsymbol{\mathbf{#1}}}
\newcommand{\mat}[1]{\boldsymbol{\mathbf{#1}}}
\newcommand{\tr}{\textrm{T}}
\newcommand{\tra}{\mathrm{T}}
\newcommand{\minimaxmse}{\rho}
\newcommand{\tm}[1]{\textrm{#1}}
\newcommand{\ti}[1]{\textit{#1}}
\newcommand{\nn}{\nonumber}
\newcommand{\convest}{\xi}
\newcommand{\dms}{ d_{1} }
\newcommand{\dsm}{ d_{2} }
\newcommand{\wms}{ w_{1} }
\newcommand{\wsm}{ w_{2} }
\newcommand{\dmin}{d}
\newcommand{\fms}[1]{ f_{1}\left( #1 \right) }
\newcommand{\fsm}[1]{ f_{2}\left( #1 \right) }
\newcommand{\legendwidth}{140px}
\newcommand{\perfcurveswidth}{0.43\textwidth}
\newtheorem{theorem}{Theorem}
\newtheorem{lemma}{Lemma}
\newtheorem{corollary}{Corollary}
\newtheorem{definition}{Definition}
\begin{document}

\title{Minimax Optimum Estimators for Phase Synchronization in IEEE 1588}
\author{Anand~Guruswamy,~\IEEEmembership{Student Member,~IEEE,}
        Rick~S.~Blum,~\IEEEmembership{Fellow,~IEEE,}
        Shalinee~Kishore~\IEEEmembership{Member,~IEEE}
		and~Mark~Bordogna% <-this % stops a space
\thanks{A. Guruswamy, R.S. Blum and S. Kishore are with the Department of Electrical and Computer Engineering, Lehigh University, Bethlehem, PA, USA (e-mail: \{asg210, rblum, skishore\}@lehigh.edu).}% <-this % stops a space
\thanks{M. Bordogna is with Intel Corporation at Allentown, PA, USA (e-mail: mark.bordogna@intel.com).}% <-this % stops a space
}
\maketitle

\begin{abstract}
The IEEE 1588 protocol has received recent interest as a means of delivering sub-microsecond level clock phase synchronization over packet-switched mobile backhaul networks.
Due to the randomness of the end-to-end delays in packet networks, the recovery of clock phase from packet timestamps in IEEE 1588 must be treated as a statistical estimation problem.
A number of estimators for this problem have been suggested in the literature, but little is known about the best achievable performance. 
In this paper, we describe new minimax estimators for this problem, that are optimum in terms of minimizing the maximum mean squared error over all possible values of the unknown parameters.
Minimax estimators that utilize information from past timestamps to improve accuracy are also introduced.
Simulation results indicate that significant performance gains over conventional estimators can be obtained via such optimum processing techniques.
These minimax estimators also provide fundamental limits on the performance of phase offset estimation schemes.
\end{abstract}
\section{Introduction}

In modern 4G cellular networks, precise synchronization between base stations is critical for ensuring seamless handovers, reducing interference and improving capacity.
Given the high cost and effort associated with global positioning system (GPS) based synchronization, carriers often find it preferable to deliver timing via the mobile backhaul network.
Since these backhaul networks are typically packet switched in nature, a popular timing approach \cite{ouellette2011using} is to use SyncE \cite{g8261}\cite{g8262} for frequency synchronization and IEEE 1588 \cite{ieee1588} for phase synchronization.
The topic of phase synchronization is the focus of this paper.
A related requirement arising from 4G LTE (Long Term Evolution) networks is that neighboring base stations must be synchronized to within $1.25$ $\mu s$ of absolute phase error, to ensure efficient operation in the time division duplexing mode.

In the IEEE 1588 precision time protocol (PTP), a master and a slave node exchange a series of packets to achieve phase synchronization. 
Packets traveling between the master and the slave encounter several intermediate network nodes such as switches or routers, accumulating random queuing delays at each node. The problem of finding the slave's phase offset from the timestamps of the exchanged packets, while combating the random queuing delays, is referred to as \textit{phase offset estimation} (POE).
The PTP standard and related literature prescribe the use of simple estimators such as the sample mean, minimum and maximum filters for POE. 
Several recent papers \cite{hadzic2010adaptive}\nocite{anyaegbu2012sample,murakami2011packet, peng2013clock, giorgi2011performance,bletsas2005evaluation}--\cite{iantosca1588synchronizing} have studied methods to improve the performance of these filters, especially in the presence of large queuing delays due to high network loads.
However, it is not well understood as to how close these POE schemes come to achieving the best possible  performance.

In this paper, we derive optimum estimators for the problem of POE, which, to our knowledge, have not been described previously in literature.
To this end, in Section  \ref{sec:PDV_SysModel} we begin by modeling POE as a non-Bayesian estimation problem.
Specifically, we treat the phase offset as an unknown deterministic parameter to be estimated from timestamps that are also affected by the fixed delays along the forward and reverse network paths. 
We then consider three observation models, with the degree of information available about the fixed delays varying between these models.
The first model assumes complete knowledge of both the fixed delays, while the second model assumes only that the difference between the fixed delays, i.e. the \textit{delay asymmetry}, is known. The third model assumes known delay asymmetry, as well as the availability of additional past observations which contain the same fixed delays but different phase offsets.
We show that POE under all three models falls under a general class of estimation problems known as \textit{vector location parameter problems}.
In Section \ref{sec:minimaxest}, for this general problem class, we derive the optimum estimator that minimizes the maximum mean squared error (\textit{maximum MSE}) over all values of the unkown parameters, and is hence \textit{minimax} optimum. 
This minimax estimator is an extension of the well-studied \textit{Pitman estimator} \cite{pitman1939estimation}, which is known to be minimax optimum for scalar location parameter problems.
Other properties of the minimax estimator, related to the estimation of linear combinations of parameters, are also derived.

In Section \ref{sec:poe_simplifications}, we simplify the general minimax estimator for the problem of POE under each observation model.
In Section \ref{sec:minimax_cross_traf}, using the properties of the minimax estimator derived in Section \ref{sec:minimaxest}, we show that under typical network assumptions, the MSE of the minimax estimator grows at least linearly with the number of intermediate nodes between the master and the slave.
Our simulations in section \ref{sec:simresults} compare the performance of the new minimax estimates against conventional estimators under several network conditions. 
Results indicate that there are several network scenarios where conventional estimation schemes fall significantly short of achieving  the maximum possible synchronization accuracy. 
Further, in asymmetric network traffic scenarios, we show that significant performance gains become available if we exploit information about fixed delays from past observations.

The results in this paper extend our previous works \cite{guru2014perf}\cite{guru2014perfconf}, where lower bounds on the maximum MSE of POE schemes under the second observation model were derived. In this paper, we address more observational models, provide the tightest lower bounds on the maximum MSE of POE schemes under each model, and also specify the estimators that achieve these lower bounds.

\section{System Model} \label{sec:PDV_SysModel}
Consider a scenario where the slave clock has a phase offset $\delta$ and zero frequency offset with respect to its master. 
To help the slave determine $\delta$, the IEEE 1588 PTP protocol allows a \textit{two-way message exchange} between the master and slave. 
Four timestamps are available to the slave after each two-way message exchange (for more details, see \cite{guru2014perf}):
\begin{enumerate}[(a)]
\item[$t_1$]: Time of transmission of \ti{SYNC} packet by master.
\item[$t_2$]: Time of reception of \ti{SYNC} packet at slave.
\item[$t_3$]: Time of transmission of \ti{DELAY\_REQ} packet by slave.
\item[$t_4$]: Time of reception of \ti{DELAY\_REQ}  packet by master.
\end{enumerate}

In order to estimate $\delta$, it is clearly sufficient for the slave to only retain the pair of timestamp differences 
\begin{align}
y_1 &= t_2-t_1 = \dms + \delta  \\
y_2 &= t_4-t_3 = \dsm - \delta 
\end{align}
Here $d_1$ and $d_2$ denote the end-to-end (ETE) network delays in the master-slave and slave-master directions, respectively. 
Assume for simplicity that a common network path is taken by all packets traveling between the master and the slave and vice-versa. 
Then each ETE delay receives contributions from three factors: 
\begin{enumerate}[(a)]
\item Constant propagation delays along network links between the master and the slave (or vice-versa).
\item Constant processing delays at intermediate nodes (such as switches or routers) along each network path.
\item Random queuing delays at intermediate nodes along each network path.
\end{enumerate}
Hence each ETE delay can be modeled as 
\begin{align}
\dms= \dms^{\min} + \wms, \qquad 
\dsm= \dsm^{\min} + \wsm
\end{align}
Here $\dms^{\min}$ and $\dsm^{\min}$ denote fixed delays corresponding to the sum of the constant propagation and processing delays, while $\wms$ and $\wsm$ model the random queuing delays.

Assuming the values of $\delta$, $\dms^{\min}$ and $\dsm^{\min}$ remain constant over the duration of $P$ two-way message exchanges, we can collect multiple observation pairs $(y_1,y_2)$ to help estimate $\delta$. 
We denote these observations as
\begin{align}
y^*_{i,1} = \dms^{\min}+ \delta + w_{i,1} \  ,\quad   y^*_{i,2} = \dsm^{\min} - \delta + w_{i,2}   \label{eq:Ptimestampdiffs}
\end{align}
for $i=1,\cdots,P$. We now consider three observation models:
\begin{enumerate}
\item \textit{Known fixed delay model (K-model)}: 
Here we assume that $\dms^{\min}$ and $\dsm^{\min}$ are fully known at the slave. 
Hence, setting ${y_{i,k}= y^*_{i,k}-d_{j}^{\tm{min}}}$, we obtain the compensated observations
\begin{align}
y_{i,1} = \delta + w_{i,1} \  ,\quad   y_{i,2} = -\delta + w_{i,2}   
\end{align}
for $i=1,\cdots,P$. These observations can be collected to obtain the vector observation model
\begin{align}
\vct{y} =  \delta \vct{e} + \vct{w} \label{eq:knownfixeddelay_obsmodel_vct}
\end{align}
where 
\begin{align}
&\vct{y} = \begin{bmatrix} \vct{y}_1^\tr \ \vct{y}_2^\tr \end{bmatrix}^\tr,\ \vct{y}_k = \left[y_{1,k} \ \cdots \ y_{P,k}\right]^\tra \label{eq:yvct}\\
&\vct{w} = \begin{bmatrix} \vct{w}_1^\tr \  \vct{w}_2^\tr \end{bmatrix}^\tr, \ 
\vct{w}_k = \left[w_{1,k} \ \cdots \ w_{P,k} \right]^\tra  \label{eq:wvct}\\
&\vct{e} = [\vct{1}_P \ \ \  (-\vct{1}_P) ]^\tra 
\end{align}
and $\vct{1}_N$ is a $N\times 1$ vector with all elements equal to $1$.

\item \ti{Standard model (S-model)}:
Here we assume that only the difference between $\dms^{\min}$ and $\dsm^{\min}$, referred to as the \textit{delay asymmetry}, is known to the slave. By compensating the observations as  
\begin{align}
y_{i,1}= y^*_{i,1}, \ \ y_{i,2}= y^*_{i,2} - \dsm^{\min} + \dms^{\min}
\end{align}
we obtain 
\begin{align}
y_{i,1} = \dmin + \delta + w_{i,1} \  ,\quad   y_{i,2} = d - \delta + w_{i,2}   \label{eq:stdmodelobs}
\end{align}
for $i=1,\cdots,P$, where $\dms^{\min}=d$. 
These observations can be denoted vectorially as 
\begin{align}
\vct{y} =  d\vct{1}_{2P} + \delta \vct{e} + \vct{w}  = \mat{A} \vct{\theta} + \vct{w}\label{eq:standard_obsmodel_vct}
\end{align}
where $\vct{y}$ and $\vct{w}$ are as defined in (\ref{eq:yvct}) and (\ref{eq:wvct}), and 
\begin{align}
& \vct{\theta} = [\theta_1\ \theta_2]^\tra = [d+\delta\ \ d-\delta]^\tra , \\
&\mat{A} = 
\left[ 
\begin{array}{cc}
\mat{1}_{P} & \mat{0}_{P} \\
\mat{0}_{P} & \mat{1}_{P} 
\end{array}
\right] , \label{eq:smodelvct}
\end{align}
with $\mat{1}_{Q}$, $\mat{0}_{Q}$ represent $Q\times 1$ vectors of ones and zeros, respectively.

Note that this model also covers the case of symmetric path delays, where $\dms^{\min} = \dsm^{\min}$, and hence the delay asymmetry is zero.
We further note that other cases where the relationship between the fixed delays is known, such as the case where the ratio $\dms^{\min}/\dsm^{\min}$ is known, can also be handled using a model similar to (\ref{eq:standard_obsmodel_vct}). For brevity, only the case of known delay asymmetry is considered here.

\item \ti{Multiblock model (M-model)}: Here we assume, as in the standard model, that the delay asymmetry is known to the slave.
Suppose we refer to a set of $P$ observation pairs as a \textit{block}. In this model, we further assume that in addition to the current block, we have observation pairs from $B$ previous blocks available to us. 
The phase offset $\delta$ is modeled as being constant for all observation pairs within each block, but varying between different blocks. The fixed delay $d$ is modeled as constant across all $B+1$ blocks. This model is representative of scenarios where changes in the fixed delay occurs over longer time scales than changes in phase offset. We denote observations pairs in past blocks using the notation
\begin{align}
y_{i,j,1} = \dmin + \delta_j + w_{i,j,1} \  ,\quad   y_{i,j,2} = d - \delta_j + w_{i,j,2}   
\end{align}
and observation pairs in the current block as
\begin{align}
y_{i,1} = \dmin + \delta + w_{i,1} \  ,\quad   y_{i,2} = d - \delta + w_{i,2}   
\end{align}
for $i=1,\cdots,P$ and $j=1,\cdots,B$.

We thus obtain the vector observation model
\begin{align}
 \vct{y} = \mat{G}\vct{\theta} + \vct{w}  \label{eq:multiblock_obsmodel_vct}
\end{align}
where 
\begin{align}
&\vct{y} = \begin{bmatrix} \vct{y}_1^\tr \ \vct{y}_2^\tr \end{bmatrix}^\tr,\\
&\vct{y}_k = \left[y_{1,k}\ \cdots\ y_{P,k} \ y_{1,1,k}\ y_{1,2,k}\ \cdots \ y_{B,P,k}\right]^\tra \label{eq:yvct1}\\
&\vct{w} = \begin{bmatrix} \vct{w}_1^\tr \  \vct{w}_2^\tr \end{bmatrix}^\tr,\\ 
&\vct{w}_k = \left[w_{1,k}\ \cdots\ w_{P,k} \ w_{1,1,k}\ w_{1,2,k}\ \cdots \ w_{B,P,k}\right]^\tra \label{eq:wvct1}\\
&\vct{\theta} = [d\  \delta\ \  \delta_{1}\ \ \cdots \ \  \delta_{B}]^\tra , \\
&\mat{G} =  \left[ \vct{1}_{2BP} \ \ \ \mat{Z} \otimes \vct{1}_{P} \right], \quad  \mat{Z} = [\mat{I}_B\ \ \  (-\mat{I}_B) ]^\tra 
\end{align}
and $\mat{I}_B$, $\otimes$ denote the identity matrix of size $B$ and the Kronecker product operator, respectively.

\end{enumerate}

In practice, the S-model and M-model are more practical than the K-model, since they only assume that the difference between the fixed delays is known. 
We still consider the K-model in this paper since it provides us with useful limits on the performance of optimum estimators under the M-model.

Given either of the observation models, the problem of POE is to estimate $\delta$ from the observation vector $\vct{y}$. 
Here we further make the following assumptions:
\begin{enumerate}[(i)]
\item All the queuing delays are strictly positive random variables that are mutually independent.
\item All forward queuing delays share a common pdf $f_1(w)$. Similarly the reverse queuing delays share a common pdf $f_2(w)$.
\item The maximum possible value for a forward or reverse queuing delay is finite.
\item All the unknown fixed delays and phase offsets are deterministic parameters, i.e. no probability distributions for these parameters are known a priori.
\end{enumerate}

\section{Minimax Estimation for General Location Parameter Problems} \label{sec:minimaxest}
We now consider a general class of estimation problems, where the effect of the unknown parameters is to shift the location of the pdf of the observations without modifying the underlying shape of the pdf.
The POE problems under all three observation models considered in Section \ref{sec:PDV_SysModel}
belong to this general class of problems.
The general results derived here shall be applied to the POE models in Section \ref{sec:poe_simplifications}. The proof of all the lemmas and theorems stated in this section are provided in the appendix.

We first define the general class of problems we are interested in studying.
\begin{definition}[Vector Location Parameter Problem]\label{defn:locparamprob}
Suppose we want to estimate a linear combination $\vct{c}^\tr \vct{\theta}$ of the unknown parameters contained in $\vct{\theta} \in \mathbb{R}^M$ (where $\vct{c}\in \mathbb{R}^M$ is a constant vector), based on observations $\vct{x}\in \mathbb{R}^N$. 
If the observations have a pdf of the form
\begin{align}
f(\vct{x} |\vct{\theta} ) = f_0(\vct{x} - \mat{G}\vct{\theta}) \label{eq:loc_param_prob_defn}
\end{align}
for some $N\times M$ matrix $\mat{G}$ and function $f_0(\cdot)$,
then we shall refer to such an estimation problem as a \textit{vector location parameter problem}.
\end{definition}
All the definitions and theorems in the remainder of this section apply specifically to this vector location parameter problem.
The results we derive further require that the function $f_0(\vct{x})$ be non-zero over a bounded, positive range of values of its arguments, as defined below.
\begin{definition}[Finite Support]\label{defn:finitesupport}
We say that $f_0(\vct{x})$ in (\ref{eq:loc_param_prob_defn}) has \textit{finite support} if there exists a finite $L > 0$ such that $f_0(\vct{x}) = 0$ whenever all the elements of the vector $\vct{x}$ lie outside the interval $[0,L]$.
\end{definition}

It is typical in statistical literature to characterize the performance of an estimator via the mean squared error (MSE) metric.  
There are three ways to define the MSE metric:
\begin{enumerate}
\item The \textit{conditional MSE} 
\begin{align}
\mathcal{R}(g(\vct{x}),\vct{\theta}) &=  \int_{\mathbb{R}^N}  [g(\vct{x})-\vct{c}^\tr \vct{\theta}]^2 f(\vct{x} |\vct{\theta} ) \tm{d}\vct{x}
\end{align}
\item The \textit{maximum MSE}
\begin{align}
\mathcal{M}(g(\vct{x})) = \sup_{\vct{\theta} \in \mathbb{R}^M} \mathcal{R}(g(\vct{x}),\vct{\theta})
\end{align}
\item The \textit{average MSE} 
\begin{align}
\mathcal{B}(g(\vct{x}),p(\vct{\theta})) = \int_{\mathbb{R}^M} \mathcal{R}(g(\vct{x}),\vct{\theta})p(\vct{\theta}) \tm{d}\vct{\theta}
\end{align}
where $p(\vct{\theta})$ is a prior distribution defined over $\vct{\theta} \in \mathbb{R}^M$.
\end{enumerate}
In this section, we consider the problem of finding estimators that are optimum in terms of minimizing the maximum MSE, and refer to such estimators as \textit{minimax} estimators. 
The definitions of the conditional and average MSEs shall be used in the proofs of the optimality of the minimax estimator.

We now consider a class of estimators known as shift invariant estimators, defined as follows.
\begin{definition}[Shift Invariant Estimator]\label{definition:ShiftInvariance}
	We say that an estimator $g(\vct{x})$ of $\vct{c}^\tr \vct{\theta}$ is \textit{shift invariant} if for the same matrix $\mat{G}$ used in (\ref{eq:loc_param_prob_defn}),
	\begin{align}
	g(\vct{x}+ \mat{G}\vct{h}) = g(\vct{x}) + \vct{c}^\tr \vct{h}  \label{eq:shift_invar_est_defn}
	\end{align}
	for all $\vct{h} \in \mathbb{R}^{M}$.
\end{definition}
While the conditional, maximum and average MSEs are in general different for a estimator, for a shift invariant estimator they are always equal, as stated in the following lemma.
\begin{lemma}\label{lemma:shiftinv_lemma}
Any shift invariant estimator $g(\vct{x})$ of $\vct{c}^\tr \vct{\theta}$  has a conditional MSE that is constant with respect to $\vct{\theta}$, and satisfies
\begin{align}
\mathcal{R}(g(\vct{x}),\vct{\theta}) = \mathcal{M}(g(\vct{x}))= \mathcal{B}(g(\vct{x}),p(\vct{\theta}))
\end{align}
for any choice of prior distribution $p(\vct{\theta})$.
\end{lemma}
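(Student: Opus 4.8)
The plan is to prove the three equalities by first establishing that the conditional MSE of any shift invariant estimator is constant in $\vct{\theta}$, and then observing that the maximum and average MSEs collapse trivially onto this common constant. The engine of the whole argument is a single translation-of-variables step combined with the defining property of shift invariance.

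First I would start from the definition of the conditional MSE, $\mathcal{R}(g(\vct{x}),\vct{\theta}) = \int_{\mathbb{R}^N} [g(\vct{x}) - \vct{c}^\tr \vct{\theta}]^2 f_0(\vct{x} - \mat{G}\vct{\theta}) \, \tm{d}\vct{x}$, using the location-parameter form of the pdf in (\ref{eq:loc_param_prob_defn}). The crucial move is the change of variables $\vct{u} = \vct{x} - \mat{G}\vct{\theta}$, a pure translation whose Jacobian is unity, so that $\tm{d}\vct{x} = \tm{d}\vct{u}$ and the density factor becomes simply $f_0(\vct{u})$. After this substitution the estimator is evaluated at $\vct{x} = \vct{u} + \mat{G}\vct{\theta}$, at which point I would invoke the shift invariance relation (\ref{eq:shift_invar_est_defn}) with $\vct{h} = \vct{\theta}$ to write $g(\vct{u} + \mat{G}\vct{\theta}) = g(\vct{u}) + \vct{c}^\tr \vct{\theta}$. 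The additive term $\vct{c}^\tr \vct{\theta}$ then cancels exactly against the $-\vct{c}^\tr \vct{\theta}$ inside the squared bracket, leaving $\mathcal{R}(g(\vct{x}),\vct{\theta}) = \int_{\mathbb{R}^N} [g(\vct{u})]^2 f_0(\vct{u}) \, \tm{d}\vct{u}$, which is manifestly free of $\vct{\theta}$. Denote this constant value by $R_0$.

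With constancy in hand the remaining two identities are immediate. Since $\mathcal{R}(g(\vct{x}),\vct{\theta}) = R_0$ for every $\vct{\theta}$, the supremum defining $\mathcal{M}(g(\vct{x}))$ is just $R_0$; and for any prior $p(\vct{\theta})$ the average MSE is $\mathcal{B}(g(\vct{x}),p(\vct{\theta})) = \int_{\mathbb{R}^M} R_0 \, p(\vct{\theta}) \, \tm{d}\vct{\theta} = R_0$, because $p(\vct{\theta})$ integrates to one. Hence all three quantities coincide. I do not anticipate a genuine obstacle, as the proof is essentially this one translation-invariance argument; the only point requiring mild care is confirming that the change of variables is legitimate and that $\int_{\mathbb{R}^N} [g(\vct{u})]^2 f_0(\vct{u}) \, \tm{d}\vct{u}$ is finite, so that $R_0$ is well defined. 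Finiteness holds whenever the estimator has finite second moment under $f_0$, which is implicitly assumed the moment one speaks of its MSE at all.
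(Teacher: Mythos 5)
Your proof is correct and follows essentially the same route as the paper's: both rest on the shift invariance property combined with a unit-Jacobian translation change of variables to show the conditional MSE does not depend on $\vct{\theta}$, after which the maximum and average MSEs collapse to the same constant. The only cosmetic difference is that you normalize to the value at $\vct{\theta}=\vct{0}$, namely $\int_{\mathbb{R}^N}[g(\vct{u})]^2 f_0(\vct{u})\,\tm{d}\vct{u}$, whereas the paper compares two arbitrary parameter values $\vct{\theta}_1,\vct{\theta}_2$ directly; these are equivalent.
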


We now give the expression for the minimax estimator and prove its optimality using Definition \ref{definition:ShiftInvariance} and Lemma \ref{lemma:shiftinv_lemma}.
\begin{theorem}[Minimax estimator]\label{theorem:MinimaxEstimator}
If $f_0(\vct{x})$ has finite support, then the estimator 
\begin{align}
g^*(\vct{x}) = 
\frac{\int_{\mathbb{R}^{M}}  [\vct{c}^\tra\hat{\vct{\theta}}]  f(\vct{x}| \hat{\vct{\theta}}) \ \tm{d}\hat{\vct{\theta}}}
{\int_{ \mathbb{R}^{M}}   f(\vct{x} | \hat{\vct{\theta}}) \ \tm{d}\hat{\vct{\theta}}} \label{eq:minimaxoptestimator}
\end{align}
satisfies the following properties:
\begin{enumerate}[(i)]
\item $g^*(\vct{x})$ is shift invariant. 
\item $g^*(\vct{x})$ is a minimax estimate of $\vct{c}^\tra \vct{\theta}$.
\item Among all estimators of $\vct{c}^\tra \vct{\theta}$ that are shift invariant, $g^*(\vct{x})$ achieves the minimum conditional MSE $\mathcal{R}(g(\vct{x}),\vct{\theta})$.
\item $g^*(\vct{x})$ is unbiased, i.e. $\Ebr{\left[ g^*(\vct{x}) - \vct{c}^\tra \vct{\theta}  \right] \ |\ \vct{\theta}}= 0$.
\end{enumerate}
\end{theorem}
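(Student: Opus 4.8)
The plan is to prove the four properties in the order (i), (iii), (ii), (iv), so that each part can use the ones before it. First I would establish shift invariance directly from the definition. Writing $f(\vct{x}\mid\hat{\vct{\theta}}) = f_0(\vct{x}-\mat{G}\hat{\vct{\theta}})$ in (\ref{eq:minimaxoptestimator}) and replacing $\vct{x}$ by $\vct{x}+\mat{G}\vct{h}$, the translation $\hat{\vct{\theta}}\mapsto\hat{\vct{\theta}}+\vct{h}$ (unit Jacobian) converts every factor $f_0(\vct{x}+\mat{G}\vct{h}-\mat{G}\hat{\vct{\theta}})$ back into $f_0(\vct{x}-\mat{G}\hat{\vct{\theta}})$ while turning $\vct{c}^\tra\hat{\vct{\theta}}$ into $\vct{c}^\tra\hat{\vct{\theta}}+\vct{c}^\tra\vct{h}$; the extra additive term factors through the denominator and yields $g^*(\vct{x}+\mat{G}\vct{h}) = g^*(\vct{x})+\vct{c}^\tra\vct{h}$, i.e.\ (\ref{eq:shift_invar_est_defn}). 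Finite support makes both integrals finite and the denominator strictly positive, so the change of variable is legitimate. Once (i) holds, Lemma \ref{lemma:shiftinv_lemma} tells us that $g^*$ has a conditional MSE that is a constant $\rho$, equal to its maximum MSE $\mathcal{M}(g^*(\vct{x}))$ and to $\mathcal{B}(g^*(\vct{x}),p(\vct{\theta}))$ for every prior $p$.

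The core of the argument is that $g^*$ is precisely the posterior mean of $\vct{c}^\tra\vct{\theta}$ under the improper uniform prior, so it should be reachable as a limit of proper Bayes rules. I would introduce the proper priors $p_n$, uniform on $[-n,n]^M$, and let $g^*_n$ be the associated posterior-mean estimator, which minimises $\mathcal{B}(\cdot,p_n)$. For fixed $\vct{x}$, finite support together with the (generic, and valid in every POE model) full column rank of $\mat{G}$ confines the $\hat{\vct{\theta}}$-integrand in (\ref{eq:minimaxoptestimator}) to a bounded set; hence $g^*_n(\vct{x}) = g^*(\vct{x})$ as soon as the box contains that set. For $\vct{\theta}$ in the interior core of the box this forces $\mathcal{R}(g^*_n(\vct{x}),\vct{\theta}) = \mathcal{R}(g^*(\vct{x}),\vct{\theta}) = \rho$, while the boundary shell carries $p_n$-mass of order $1/n$, so $\mathcal{B}(g^*_n,p_n)\to\rho$. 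Then for an \emph{arbitrary} estimator $g$ one has $\mathcal{M}(g(\vct{x})) \ge \mathcal{B}(g(\vct{x}),p_n) \ge \mathcal{B}(g^*_n,p_n)$, and letting $n\to\infty$ gives $\mathcal{M}(g(\vct{x})) \ge \rho = \mathcal{M}(g^*(\vct{x}))$, which is minimaxity (ii). Restricting $g$ to shift invariant estimators and using Lemma \ref{lemma:shiftinv_lemma} to replace $\mathcal{M}(g(\vct{x}))$ by $\mathcal{R}(g(\vct{x}),\vct{\theta})$ yields $\mathcal{R}(g(\vct{x}),\vct{\theta}) \ge \rho$, which is (iii).

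Unbiasedness (iv) then follows cheaply from (iii). A change of variable $\vct{u} = \vct{x}-\mat{G}\vct{\theta}$ in the conditional expectation, combined with shift invariance, shows that $\Ebr{\left[g^*(\vct{x})-\vct{c}^\tra\vct{\theta}\right]\ |\ \vct{\theta}}$ equals a constant $b$ independent of $\vct{\theta}$. But $g^*(\vct{x})-b$ is again shift invariant (subtracting a constant preserves (\ref{eq:shift_invar_est_defn})) and has conditional MSE $\rho-b^2$; by the optimality just proved in (iii), $\rho \le \rho-b^2$, forcing $b=0$.

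I expect the delicate step to be the limit $\mathcal{B}(g^*_n,p_n)\to\rho$, specifically the uniform boundedness of the risks $\mathcal{R}(g^*_n(\vct{x}),\vct{\theta})$ over both $n$ and $\vct{\theta}$ that is needed so the vanishing boundary mass contributes nothing. This is exactly where finite support is indispensable: since $f_0$ is supported on a set of bounded diameter and $\mat{G}$ has full column rank, the likelihood, viewed as a function of $\hat{\vct{\theta}}$, is supported on a $\hat{\vct{\theta}}$-set of diameter bounded independently of $\vct{x}$ and $n$, so $g^*_n(\vct{x})$ never strays more than an $O(1)$ amount from $\vct{c}^\tra\vct{\theta}$ even for $\vct{\theta}$ at the edge of the box. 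Verifying this bound, and confirming that the set of $\vct{x}$ on which $g^*_n$ disagrees with $g^*$ is negligible uniformly in the relevant $\vct{\theta}$, is the main technical work.
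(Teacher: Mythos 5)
Your proposal follows essentially the same route as the paper's proof: shift invariance by a change of variables in the defining integrals, minimaxity by sandwiching the maximum MSE of an arbitrary estimator between Bayes risks of the MMSE estimators under uniform priors on growing boxes $[-n,n]^M$ (estimators which, thanks to finite support, coincide with $g^*$ in the limit), part (iii) from Lemma \ref{lemma:shiftinv_lemma}, and unbiasedness by observing that a nonzero constant bias could be subtracted off to produce a shift invariant estimator beating the optimal risk. If anything, you are more explicit than the paper about the delicate points---the uniform boundedness of risks and the $O(1/n)$ boundary-shell mass needed to justify $\mathcal{B}(g^*_n,p_n)\to\rho$, and the full-column-rank condition on $\mat{G}$ implicitly required for $g^*$ to be well defined---both of which the paper's proof passes over silently.
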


An interesting property of the minimax estimator is that for a given set of observations, the minimax estimate of a linear combination of parameters is identical to the same linear combination of the minimax estimates of each of the parameters. 
Formally, this can be stated as follows.
\begin{lemma} \label{lemma:linearsummation}
Let $\vct{\theta}=[\theta_1\ \cdots\ \theta_M]^T$, and let $g_i^*(\vct{x})$ represent the minimax estimate of $\theta_i$. 
If $\vct{c} = [c_1\ \cdots\ c_M]^\tra$, then the minimax estimate $g^*(\vct{x})$ of $\vct{c}^\tra \vct{\theta}$ satisfies
\begin{align*}
g^*(\vct{x}) = \sum_{i=1}^{M} c_i g_i^*(\vct{x})
\end{align*}
\end{lemma}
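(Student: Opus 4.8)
The plan is to prove the identity by direct substitution into the closed form of the minimax estimator given in (\ref{eq:minimaxoptestimator}), relying on nothing more than the linearity of the integral. The key preliminary remark is that each scalar minimax estimate $g_i^*(\vct{x})$ is itself a special case of Theorem \ref{theorem:MinimaxEstimator}: taking $\vct{c} = \vct{e}_i$, the $i$-th canonical basis vector of $\mathbb{R}^M$, makes $\vct{c}^\tra \vct{\theta} = \theta_i$, so that
\begin{align}
g_i^*(\vct{x}) = \frac{\int_{\mathbb{R}^M} \hat{\theta}_i \, f(\vct{x}|\hat{\vct{\theta}}) \, \tm{d}\hat{\vct{\theta}}}{\int_{\mathbb{R}^M} f(\vct{x}|\hat{\vct{\theta}}) \, \tm{d}\hat{\vct{\theta}}}. \nn
\end{align}
Crucially, the normalizing denominator here is exactly the same integral that appears in the denominator of $g^*(\vct{x})$, and it does not depend on the choice of $\vct{c}$.

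With this in hand, I would begin from (\ref{eq:minimaxoptestimator}) and expand the inner product in the numerator as $\vct{c}^\tra \hat{\vct{\theta}} = \sum_{i=1}^M c_i \hat{\theta}_i$. Since this is a finite sum and the coefficients $c_i$ are constants independent of the variable of integration $\hat{\vct{\theta}}$, linearity of the integral lets me rewrite the numerator as $\sum_{i=1}^M c_i \int_{\mathbb{R}^M} \hat{\theta}_i \, f(\vct{x}|\hat{\vct{\theta}}) \, \tm{d}\hat{\vct{\theta}}$. Dividing each term of this sum by the common denominator --- which, as noted above, is shared by every $g_i^*(\vct{x})$ --- then yields $g^*(\vct{x}) = \sum_{i=1}^M c_i \, g_i^*(\vct{x})$, as claimed.

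The main obstacle here is essentially only one of bookkeeping: the result hinges entirely on the observation that the denominator in (\ref{eq:minimaxoptestimator}) is invariant to $\vct{c}$, so that a single normalization serves simultaneously for the vector estimate and for each of the component estimates. To make the manipulation fully rigorous I would invoke the finite-support hypothesis of Definition \ref{defn:finitesupport} (exactly as Theorem \ref{theorem:MinimaxEstimator} already requires) to guarantee that $\int_{\mathbb{R}^M} f(\vct{x}|\hat{\vct{\theta}}) \, \tm{d}\hat{\vct{\theta}}$ and each $\int_{\mathbb{R}^M} \hat{\theta}_i \, f(\vct{x}|\hat{\vct{\theta}}) \, \tm{d}\hat{\vct{\theta}}$ converge, so that every $g_i^*(\vct{x})$ is well defined and the interchange of the finite sum with the integral is unambiguous. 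No deeper analytic estimate or separate optimality argument is needed beyond what Theorem \ref{theorem:MinimaxEstimator} already supplies.
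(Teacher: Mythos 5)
Your proposal is correct and follows essentially the same route as the paper's own proof: both write $g_i^*(\vct{x})$ and $g^*(\vct{x})$ via the closed form in (\ref{eq:minimaxoptestimator}), expand $\vct{c}^\tra\hat{\vct{\theta}}=\sum_{i=1}^{M}c_i\hat{\theta}_i$, and use linearity of the integral together with the common, $\vct{c}$-independent denominator. Your added remark on finite support ensuring the integrals are well defined is a sensible (if implicit in the paper) refinement, not a departure.
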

This property will allow us to simplify the form of the minimax estimator under the S-model in Section \ref{sec:poe_simplifications}.
Another interesting property of the minimax estimator emerges when we consider multiple minimax estimates, each based on a different observation vector. 
Here we can show that the sum of the MSEs of the individual minimax estimates will always be  less than the MSE of the minimax estimate based on sum of all the observation vectors. This can be formally stated as follows.
\begin{theorem}\label{theorem:maxmselinearity}
Let $\vct{x}_1,\cdots,\vct{x}_K$ be $N$-dimensional random vectors with pdfs of the form
\begin{align}
f(\vct{x}_k|\vct{\theta}_k)= f_k(\vct{x}_k - \mat{G}_k\vct{\theta}_k)
\end{align}
where $f_k(\cdot)$ has finite support for $k=1,\cdots,K$. Assume that $\vct{x}_1,\cdots,\vct{x}_K$ are all mutually independent conditioned on the unknown parameters, i.e. the joint pdf $f(\vct{x}_{k_1},\vct{x}_{k_2}|\vct{\theta}_{k_1},\vct{\theta}_{k_2})$ satisfies
\begin{align}
f(\vct{x}_{k_1},\vct{x}_{k_2}|\vct{\theta}_{k_1},\vct{\theta}_{k_2}) = f(\vct{x}_{k_1}|\vct{\theta}_{k_1}) f(\vct{x}_{k_2}|\vct{\theta}_{k_2}) \label{eq:indepdenceassumption}
\end{align} 
for all values of $k_1$ and $k_2$.
Let $h_k^*(\vct{x}_k)$ denote the minimax estimate of $\vct{c}^\tra \vct{\theta}_k$.
Further, let $\vct{x} =\sum_{k=1}^{K} \vct{x}_k$, $\vct{\theta} = \sum_{k=1}^{K}\vct{\theta}_k$, and let $g^*(\vct{x})$ denote the minimax estimate of $\vct{c}^\tra \vct{\theta}$ from $\vct{x}$.
Then $g^*(\vct{x})$ satisfies
\begin{align}
\mathcal{M}(g^*(\vct{x})) \ge \sum_{k=1}^{K}\mathcal{M}(h^*_k(\vct{x}_k))
\end{align}
\end{theorem}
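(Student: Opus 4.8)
The plan is to lift the problem to the stacked observation vector $\vct{X} = [\vct{x}_1^\tra\ \cdots\ \vct{x}_K^\tra]^\tra$ with stacked parameter $\vct{\Theta} = [\vct{\theta}_1^\tra\ \cdots\ \vct{\theta}_K^\tra]^\tra$. By the independence assumption (\ref{eq:indepdenceassumption}) the joint pdf factors as $f(\vct{X}|\vct{\Theta}) = \prod_{k=1}^{K} f_k(\vct{x}_k - \mat{G}_k\vct{\theta}_k) = f_0(\vct{X} - \tilde{\mat{G}}\vct{\Theta})$, with $\tilde{\mat{G}} = \tm{blkdiag}(\mat{G}_1,\ldots,\mat{G}_K)$ and $f_0(\vct{X}) = \prod_k f_k(\vct{x}_k)$. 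Since each $f_k$ has finite support, so does $f_0$: if every entry of $\vct{X}$ lies outside $[0,L]$ then in particular every entry of $\vct{x}_1$ does, forcing $f_1 = 0$ and hence $f_0 = 0$. Thus the stacked problem is a vector location parameter problem to which Theorem \ref{theorem:MinimaxEstimator} and Lemmas \ref{lemma:shiftinv_lemma}--\ref{lemma:linearsummation} apply, and the target decomposes as $\vct{c}^\tra\vct{\theta} = \sum_{k=1}^{K}\vct{c}^\tra\vct{\theta}_k$, a linear combination of the entries of $\vct{\Theta}$.

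First I would identify the minimax estimate of $\sum_k\vct{c}^\tra\vct{\theta}_k$ from $\vct{X}$. By Lemma \ref{lemma:linearsummation} this minimax estimate equals the corresponding linear combination of the componentwise minimax estimates of the entries of $\vct{\Theta}$; the product structure of $f(\vct{X}|\hat{\vct{\Theta}})$ makes each such componentwise estimate reduce, via (\ref{eq:minimaxoptestimator}), to the estimate computed from $\vct{x}_k$ alone, since the $j\neq k$ integrals $\int f_j(\vct{x}_j-\mat{G}_j\hat{\vct{\theta}}_j)\,\tm{d}\hat{\vct{\theta}}_j$ factor identically out of numerator and denominator and cancel. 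Reassembling gives that the minimax estimate of $\sum_k\vct{c}^\tra\vct{\theta}_k$ from $\vct{X}$ is $h^*(\vct{X}) = \sum_{k=1}^{K} h_k^*(\vct{x}_k)$. Being the minimax estimator of the stacked problem, $h^*$ is shift invariant, so by Lemma \ref{lemma:shiftinv_lemma} its maximum MSE equals its conditional MSE at any $\vct{\Theta}$. Expanding $\E{(\sum_k (h_k^*(\vct{x}_k) - \vct{c}^\tra\vct{\theta}_k))^2}$ and using that the $h_k^*(\vct{x}_k)$ are mutually independent and each unbiased (Theorem \ref{theorem:MinimaxEstimator}(iv)), the cross terms vanish, yielding $\mathcal{M}(h^*(\vct{X})) = \sum_{k=1}^{K}\mathcal{M}(h_k^*(\vct{x}_k))$.

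Next I would view $g^*$ itself as an estimator on the stacked space through $g^*(\vct{x}) = g^*(\sum_k\vct{x}_k)$. Shifting $\vct{X}$ by $\tilde{\mat{G}}\vct{H}$, where $\vct{H} = [\vct{h}_1^\tra\ \cdots\ \vct{h}_K^\tra]^\tra$, shifts the summed observation by $\sum_k\mat{G}_k\vct{h}_k$; using that $g^*$ is shift invariant for the summed problem (Theorem \ref{theorem:MinimaxEstimator}(i)), one verifies that $g^*$ satisfies (\ref{eq:shift_invar_est_defn}) for the stacked problem with combination vector $[\vct{c}^\tra\ \cdots\ \vct{c}^\tra]^\tra$, i.e. it is a shift invariant estimator of $\sum_k\vct{c}^\tra\vct{\theta}_k$. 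Moreover, because the distribution of $\vct{x}=\sum_k\vct{x}_k$ and the target both depend on $\vct{\Theta}$ only through $\sum_k\vct{\theta}_k$, the conditional MSE of $g^*$ in the stacked problem equals $\mathcal{M}(g^*(\vct{x}))$. Since Theorem \ref{theorem:MinimaxEstimator}(iii) states that $h^*$ attains the smallest conditional MSE among all shift invariant estimators of the stacked problem, applying this to the shift invariant estimator $g^*$ gives $\mathcal{M}(g^*(\vct{x})) \ge \mathcal{M}(h^*(\vct{X})) = \sum_{k=1}^{K}\mathcal{M}(h_k^*(\vct{x}_k))$, which is the claim.

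I expect the main obstacle to be the bookkeeping of the third paragraph: verifying that $g^*$, originally defined only on the summed observation, is genuinely a shift invariant estimator of $\sum_k\vct{c}^\tra\vct{\theta}_k$ on the stacked problem and that its maximum MSE is unchanged under this re-interpretation. This hinges on the law of $\vct{x}$ depending on $\vct{\Theta}$ solely through $\sum_k\vct{\theta}_k$, which is exactly the regime of the intended application, where a common $\mat{G}$ (so that $\mat{G}_k \equiv \mat{G}$) governs every block; this is also precisely what makes the summed problem a bona fide location parameter problem for which $g^*$ is defined. By contrast, the factorization of the Pitman-type integrals and the vanishing of the cross terms via independence and unbiasedness in the second paragraph are routine once the stacked formulation is in place.
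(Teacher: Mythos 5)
Your proposal is correct and follows essentially the same route as the paper's proof: form the stacked problem, use Lemma \ref{lemma:linearsummation} and the product structure of the joint pdf to get $h^*(\vct{x}_1,\cdots,\vct{x}_K)=\sum_{k}h_k^*(\vct{x}_k)$, decompose its maximum MSE additively via shift invariance, unbiasedness and conditional independence, and then compare $g^*$ against $h^*$. The only substantive difference is in the comparison step: the paper asserts $\mathcal{M}(g^*(\vct{x}))\ge\mathcal{M}(h^*(\vct{x}_1,\cdots,\vct{x}_K))$ informally (``$h^*$ has more information available to it''), whereas you make it rigorous by checking that $\vct{X}\mapsto g^*\left(\sum_k\vct{x}_k\right)$ is itself a shift invariant estimator for the stacked problem and invoking Theorem \ref{theorem:MinimaxEstimator}(iii) --- a tightening that also usefully surfaces the implicit assumption $\mat{G}_k\equiv\mat{G}$ needed for the summed problem, and hence $g^*$, to be well defined.
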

This property will be useful in proving certain properties of the minimax estimator for POE in Section \ref{sec:minimax_cross_traf}.

\section{Simplification of Minimax Estimator for the POE problem}\label{sec:poe_simplifications}
We now use the results in Section \ref{sec:minimaxest} to obtain minimax optimum estimators under the three POE observation models discussed in Section \ref{sec:PDV_SysModel}, and simplify the resulting expressions. 
\begin{enumerate}[1)]
\item \textit{Known fixed delay model}: 
As stated in (\ref{eq:knownfixeddelay_obsmodel_vct}), the pdf of the observation vector $\vct{y}$ has the form
\begin{align}
f(\vct{y} |\delta  ) &= f_{\vct{W}}(\vct{y} - \delta \vct{e}) 
\end{align}
where
\begin{align}
f_{\vct{W}}(\vct{w}) &= \prod_{i=1}^{P} f_1(w_{i,1}) f_2(w_{i,2})
\end{align}
Hence, according to Definition \ref{defn:locparamprob}, this is a vector location parameter problem. 
Thus, using Theorem \ref{theorem:MinimaxEstimator}, we obtain the minimax estimator of $\delta$ as 
\begin{align}
\hat{\delta } (\vct{y})= \frac{\int_{\mathbb{R}} \delta f_{\vct{W}}(\vct{y} - \delta \vct{e}) \tm{d}\delta }{\int_{\mathbb{R}} f_{\vct{W}}(\vct{y} - \delta \vct{e}) \tm{d}\delta}\label{eq:kmodmm}
\end{align}

\item \textit{Standard Model}:
As stated in (\ref{eq:standard_obsmodel_vct}), here the pdf of the observation vector $\vct{y}$ has the form
\begin{align}
f(\vct{y} |\vct{\theta} ) &= f_{\vct{W}}(\vct{y} - \mat{A} \vct{\theta}) \\
&= f_{\vct{W},1}(\vct{y}_1 - \theta_1\mat{1}_{P} ) 
f_{\vct{W},2}(\vct{y}_2 - \theta_2\mat{1}_{P} )
\end{align}
where
\begin{align}
f_{\vct{W}}(\vct{w}) &= \prod_{i=1}^{P} f_1(w_{i,1}) f_2(w_{i,2})\ ,\\
f_{\vct{W},k}(\vct{w}_k) &= \prod_{i=1}^{P} f_k(w_{i,k}) \quad \tm{for}\ \ k=1,2 
\end{align}
Hence, according to Definition \ref{defn:locparamprob}, this is a vector location parameter problem. 
Our goal is to estimate ${\delta = \vct{c}^\tra \vct{\theta}}$ (where $\vct{c} = [0.5\ -0.5]^\tra$) from the observation vector  $\vct{y}= \mat{A} \vct{\theta} + \vct{w}$. 
Hence, using Theorem \ref{theorem:MinimaxEstimator}, we obtain the minimax estimate 
\begin{align}
& \hat{\delta}(\vct{y}) = 
\frac{\int_{ \mathbb{R}^{2}}  [\vct{c}^\tra\vct{\theta} ] f(\vct{y}| \vct{\theta}) \ \tm{d}\vct{\theta}}
{\int_{ \mathbb{R}^{2}}   f(\vct{y}| \vct{\theta}) \ \tm{d}\vct{\theta}} 
\end{align}
Using Lemma \ref{lemma:linearsummation}, we can further simplify the estimator as

\begin{align}
\hat{\delta}(\vct{y}) &= 
\frac{1}{2}
\Bigg[ 
\frac{
\int_{\mathbb{R}} \theta_1 f_{\vct{W},1}(\vct{y}_1 -\theta_1  \vct{1}_{P})  \ \tm{d}\theta_1 
}
{
\int_{\mathbb{R}}   f_{\vct{W},1}(\vct{y}_1 -\theta_1  \vct{1}_{P}) 
\ \tm{d}\theta_1 
} \nn \\  
&\qquad \qquad -  
\frac{
\int_{\mathbb{R}} \theta_2 f_{\vct{W},2}(\vct{y}_2 -\theta_2  \vct{1}_{P})  \ \tm{d}\theta_2 
}
{
\int_{\mathbb{R}}   f_{\vct{W},2}(\vct{y}_2 -\theta_2  \vct{1}_{P}) 
\ \tm{d}\theta_2 
}  
\Bigg] \label{eq:smodmm}
\end{align}

\item \textit{Multiblock Model}: 
As stated in (\ref{eq:multiblock_obsmodel_vct}), here the pdf of the observation vector $\vct{y}$ has the form
\begin{align}
f(\vct{y} |\vct{\theta} ) &= f_{\vct{W}}(\vct{y} - \mat{G} \vct{\theta}) 
\end{align}
where
\begin{align}
f_{\mat{W}}(\vct{w}) &=\prod_{i=1}^{B}\prod_{j=1}^{P}  \prod_{k=1}^{2} f_{k}(w_{i,j,k})
\end{align}
Hence, according to Definition \ref{defn:locparamprob}, this is also a vector location parameter problem.
Our goal is to estimate $\delta = \hat{\vct{c}}^\tra \vct{\theta}$ from $\vct{y}$, where
$\hat{\vct{c}} = [0\ 1\ \underbrace{0\ \cdots\ 0}_{B-1\ \tm{zeros}}]^\tra$. 
Using Theorem \ref{theorem:MinimaxEstimator}, we obtain the minimax estimate
\begin{align}
\hat{\delta}(\vct{y}) = 
\frac{\int_{\mathbb{R}} \delta \Gamma(\delta, \vct{y}) \tm{d}\delta}
{\int_{\mathbb{R}}  \Gamma(\delta, \vct{y}) \tm{d}\delta} \label{eq:mmodmm}
\end{align}
where 
\begin{align}
\Gamma(\delta,\vct{y}) &= \int_{\mathbb{R}} \left[ \prod_{i=1}^{P} \prod_{k=1}^{2} f_k(y_{i,k} - d + (-1)^k \delta ) \right] \nn \\
&\qquad\qquad\qquad\qquad \cdot   \Omega(d,\vct{y})\ \tm{d}(d) \label{eq:mmgam}\\
 \Omega_j(d,\vct{y}) &= \prod_{j=1}^{B} \left[\int_{\mathbb{R}} \prod_{i=1}^{P} \prod_{k=1}^{2}  f_k(y_{i,j,k} - d + (-1)^k \delta_j ) \tm{d} \delta_{j} \right] \label{eq:mmomega}
\end{align}
\end{enumerate}

In scenarios where analytical expressions for the queuing delay pdfs $\fms{w}$ and $\fsm{w}$ are known, it might be possible to further simplify the integrals in (\ref{eq:kmodmm}), (\ref{eq:smodmm}) and (\ref{eq:mmodmm})-(\ref{eq:mmomega}). 
In the more general case of arbitrary pdfs $\fms{w}$ and $\fsm{w}$, these integrals can be computed by approximating them with Riemann summations. In such cases, the computational complexity associated with the minimax estimators will depend on the number of bins used in the Riemann summations. 
Typically, this computational complexity is significantly higher than that of conventional estimators such as the sample minimum, mean, median or maximum estimators.

Due to the nature of the POE observation models, some comments on the MSE of the minimax optimum estimator (or the \textit{minimax MSE}) can be made directly, without requiring numerical evaluations. 
Firstly, the minimax MSE under the K-model is guaranteed to be lower than that under the S-model or M-model, since the nuisance parameter $d$ is absent from the K-model. 
Further, the minimax MSE under the M-model is guaranteed to be lower than that under the S-model, since the M-model has additional information from past blocks available to it. 
This past information can be used to reduce the uncertainty associated with the nuisance parameter $d$, and hence improve the estimate of $\delta$.

\section{Minimax MSE under IID single-node queuing delays}\label{sec:minimax_cross_traf}
The performance of the minimax estimators described in Section \ref{sec:poe_simplifications} depends on the nature of the network queuing delays, which in turn depends on the number of nodes present between the master and the slave. 
Theorem \ref{theorem:maxmselinearity} can be used to obtain a simple relationship between the minimax MSE and the number of intermediate nodes, under certain network conditions.
We state this relationship in the form of the following corollary to Theorem \ref{theorem:maxmselinearity}, with the proof provided in the appendix.
\begin{corollary}\label{theorem:msevsswitches}
Consider a network consisting of a master and a slave separated by $N$ nodes. 
Let $\minimaxmse(N)$ represent the minimax MSE associated with POE under the S-model in this scenario, for a fixed number of two-way message exchanges.
Let the \textit{single-node queuing delay} refer to the queuing delay experienced by packets at any single node\footnote{Measurements of the single node queuing delay in the forward or reverse direction would corresponding to the proper entries of the vector $\vct{w}$ in (\ref{eq:standard_obsmodel_vct}) for the case where only N=1 node is involved.}. 
Assume that the single-node queuing delays across all nodes in the forward direction are independent and identically distributed (i.i.d.). Assume that the same is true in the reverse direction as well. 
Then $\minimaxmse(N)$ satisfies
\begin{align}
\minimaxmse(KL) \ge K \minimaxmse(L) \label{eq:smodelmselinearity}
\end{align}
where $K$ and $L$ are any two positive integers.
\end{corollary}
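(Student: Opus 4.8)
\emph{Proof proposal.} The plan is to realize the $KL$-node S-model POE problem as a superposition of $K$ mutually independent $L$-node S-model POE problems and then apply Theorem~\ref{theorem:maxmselinearity} directly. First I would partition the $KL$ intermediate nodes into $K$ disjoint groups of $L$ nodes each. Because the single-node queuing delays are mutually independent and, in each direction, i.i.d., the total forward queuing delay affecting the $i$-th exchange splits as $w_{i,1}=\sum_{k=1}^{K} w_{k,i,1}$, where $w_{k,i,1}$ is the sum of the $L$ single-node forward delays in group $k$; the reverse delays split analogously. Each partial sum $w_{k,i,1}$ (resp. $w_{k,i,2}$) is a sum of $L$ i.i.d. single-node delays and therefore has exactly the distribution of the total forward (resp. reverse) queuing delay in an $L$-node network.

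Next I would build the $K$ sub-problems. For group $k$ define the $L$-node S-model observation $\vct{x}_k = \mat{A}\vct{\theta}_k + \vct{w}_k$, where $\mat{A}$ is the matrix in (\ref{eq:smodelvct}) (which depends only on the number of exchanges $P$, not on the node count), $\vct{w}_k$ collects the group-$k$ partial delays, and $\vct{\theta}_k=[d_k+\delta_k\ \ d_k-\delta_k]^\tra$ with $d_k=d/K$ and $\delta_k=\delta/K$ so that $\sum_{k=1}^K d_k=d$ and $\sum_{k=1}^K\delta_k=\delta$. Summing gives $\vct{x}:=\sum_{k=1}^K\vct{x}_k=\mat{A}\vct{\theta}+\vct{w}$ with $\vct{\theta}=\sum_{k=1}^K\vct{\theta}_k$ and $\vct{w}=\sum_{k=1}^K\vct{w}_k$, which is precisely the $KL$-node S-model observation. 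With $\vct{c}=[0.5\ \ -0.5]^\tra$ one checks $\vct{c}^\tra\vct{\theta}_k=\delta_k$ and $\vct{c}^\tra\vct{\theta}=\delta$, so estimating $\vct{c}^\tra\vct{\theta}_k$ is the $L$-node POE for the offset $\delta_k$, while estimating $\vct{c}^\tra\vct{\theta}$ is the $KL$-node POE for $\delta$.

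Finally I would verify the hypotheses of Theorem~\ref{theorem:maxmselinearity} and read off the claim. The vectors $\vct{x}_1,\dots,\vct{x}_K$ are mutually independent because they are assembled from single-node delays on disjoint groups of nodes, and each noise pdf has finite support since assumption (iii) bounds the single-node delays and a finite sum of bounded positive delays stays bounded. Thus Theorem~\ref{theorem:maxmselinearity} applies with $h_k^*$ the minimax estimate of $\delta_k$ from $\vct{x}_k$ and $g^*$ the minimax estimate of $\delta$ from $\vct{x}$. By the distributional identifications above, $\mathcal{M}(h_k^*(\vct{x}_k))=\rho(L)$ for every $k$ and $\mathcal{M}(g^*(\vct{x}))=\rho(KL)$, whence $\rho(KL)=\mathcal{M}(g^*(\vct{x}))\ge\sum_{k=1}^K\mathcal{M}(h_k^*(\vct{x}_k))=K\rho(L)$.

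The main obstacle will be justifying the two distributional identifications rather than the final inequality. I must argue that $\rho(\cdot)$ depends only on the common noise law and not on the particular split $d_k=d/K,\ \delta_k=\delta/K$; this follows from Lemma~\ref{lemma:shiftinv_lemma}, since the estimator of Theorem~\ref{theorem:MinimaxEstimator} is shift invariant and hence has a conditional MSE that is constant in the true parameters. I also need the partition to reproduce the $KL$-node noise law exactly, which is precisely where the i.i.d. and mutual-independence assumptions on the single-node delays are essential; without them the sum $\sum_k\vct{x}_k$ would fail to match a genuine $KL$-node observation and the reduction would collapse.
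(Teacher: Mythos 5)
Your proposal is correct and follows essentially the same route as the paper's own proof: partition the $KL$ nodes into $K$ groups of $L$, write the $KL$-node observation as the sum of $K$ independent observations each distributed as a genuine $L$-node S-model problem, assign sub-parameters summing to $(d,\delta)$ (with Lemma~\ref{lemma:shiftinv_lemma} guaranteeing the particular split is immaterial), and invoke Theorem~\ref{theorem:maxmselinearity}. The only differences are presentational — the paper phrases the decomposition as physically inserting each $L$-node cascade between a new master--slave pair and leaves the parameter split generic, whereas you fix $d_k=d/K$, $\delta_k=\delta/K$ and additionally spell out the finite-support check — but the mathematical content is identical.
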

For $L=1$, the relation in (\ref{eq:smodelmselinearity}) reduces to $\minimaxmse(K) \ge K \minimaxmse(1)$, which essentially implies that \textit{in networks with i.i.d. single-node queuing delays at all intermediate network nodes, the minimax MSE grows at least linearly with the number of nodes}. 
This interpretation can be especially useful for network designers, since it provides a computationally simple upper limit on the number of nodes that can be allowed between the master and the slave for a given synchronization accuracy requirement.  
A typical example where independent, identically distributed single-node queuing delay distributions can be assumed is a network in which only \textit{cross traffic flows} (defined in Section \ref{sec:simresults}) are present. 
Note that a relationship similar to (\ref{eq:smodelmselinearity}) can also be derived under the K-model and the M-model. For brevity, only the S-model is considered in this section.

\section{Simulation Results}\label{sec:simresults}
We now compare the performance of conventional POE schemes against the newly derived minimax estimators. 
To this end, we consider a few network scenarios motivated by the ITU-T recommendation G.8261 \cite{g8261}. 
The metric we use to quantify estimator performance is the maximum MSE. \
For brevity, we refer to the maximum MSE as simply the MSE throughout this section.

We consider four commonly used conventional POE schemes, namely the sample minimum, maximum, mean and median filtering schemes. Given the observation vector $\vct{y}$ of either the K-model or the S-model, these schemes use an estimator of the form
\begin{align}
\hat{\delta} = \frac{\convest(\vct{y}_1) - \convest(\vct{y}_2)}{2} \label{eq:convest}
\end{align}
where $\convest(\vct{x})$ denotes either the minimum, maximum, mean or median of the elements of the vector $\vct{x}$. 
Under the M-model, these estimators behave exactly as under the S-model, discarding information from past blocks since they have no means of utilizing it. 
It is easy to show that these estimators are shift invariant under all three observation models. They also have an identical value for the MSE across all three models, given as
\begin{align}
\mathcal{M}(\hat{\delta}) &= \Ebr{\hat{\delta}^2\ \Big|\ \vct{\theta}=\left[ \begin{array}{c}0\\ 0\end{array}\right]} = \sigma^2 + \mu^2 \label{eq:conv_est_mse_breakup}
\end{align}
where 
\begin{align}
\sigma^2 &= \tm{var}\left\{\hat{\delta}^2\ \Big|\ \vct{\theta}=\left[ \begin{array}{c}0\\ 0\end{array}\right]\right\} \\
&= \frac{1}{4}\big[ \var{\convest(\vct{w}_1)} + \var{\convest(\vct{w}_2)} \big] \label{eq:conv_est_var}
\end{align}
represents the estimator variance, while 
\begin{align}
\mu =\frac{1}{2}\Big[\E{\convest(\vct{w}_1)} - \E{\convest(\vct{w}_2)}\Big] \label{eq:conv_est_bias}
\end{align}
represents the estimator bias. Note that
\begin{align}
E[g(\vct{w}_i)] &= \int  \convest(\vct{w}_{i}) f_{\vct{W}_i}(\vct{w}_i) \tm{d}\vct{w}_{i} \ , \\
\var{\convest(\vct{w}_i)} &= \int_{\vct{w}_i} \left\{\convest(\vct{w}_i)-E[\convest(\vct{w}_i)]\right\}^2 f_{\vct{W}_i}(\vct{w}_i) \tm{d}\vct{w}_i\ , \\
f_{\vct{W}_i}(\vct{w}_i) &= \prod_{j=1}^{P}  f_{i}(w_{i,j})
\end{align}
It is easy to see from (\ref{eq:conv_est_bias}) that when the forward and reverse queuing delay distributions $f_1(w)$ and $f_2(w)$ are not identical, $\mu$ can be non-zero, and hence have a significant contribution in the MSE expression in (\ref{eq:conv_est_mse_breakup}). 
This can be avoided by subtracting out the bias, to obtain the unbiased estimate
\begin{align}
\tilde{\delta} = \hat{\delta} - \mu \label{eq:convestunbiased}
\end{align}
Hence, in our results, we measure the performance of conventional estimators as their MSE after their bias has been compensated.
 
In order to obtain the queuing delay distributions, we consider a Gigabit ethernet network consisting a cascade of $20$ switches between the master and slave nodes. Each switch is assumed to be a store-and-forward switch, which implements strict priority queuing. 
We consider two types of background traffic flows in this network:
\begin{enumerate}
	\item  \textit{Cross traffic flows}: In such traffic flows \cite{g8261}\cite{guru2014perf}, fresh background traffic packets are injected at each node along the master-slave path, and these packets exit the master-slave path at the subsequent node (see 4-switch example in Fig. \ref{fig:Fourswitch_cross}). 
	The arrival times and sizes of the packets injected at each switch are assumed to be statistically independent of that of packets injected at other switches.
	\item \textit{Mixed traffic flows}: Here a mixture of cross traffic flows and \textit{inline traffic flows} are present in the network. Inline traffic flows \cite{anyaegbu2012sample} are characterized by packets that are injected only at the first switch along the master slave path, and that travel along the same path as synchronization traffic through the entire cascade of switches (see 4-switch example in Fig. \ref{fig:Fourswitch_inline}).
\end{enumerate}
\begin{figure}
\centering
\subfloat[Cross traffic flows] {\includegraphics[width=0.8\columnwidth]{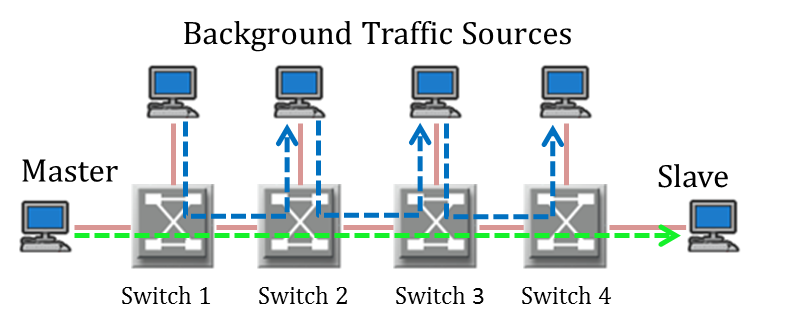}\label{fig:Fourswitch_cross}}\\
\subfloat[Inline traffic flows] {\includegraphics[width=0.8\columnwidth]{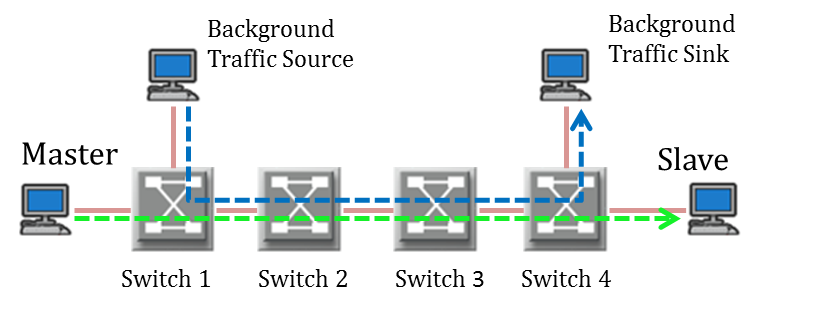}\label{fig:Fourswitch_inline}}
\caption{Examples of four switch networks with cross and inline traffic flows. Red lines indicate network links, blue lines indicate the direction of background traffic flows, and green line represents the direction of synchronization traffic flows.}
\end{figure}
With regard to the distribution of packet sizes in background traffic, we consider Traffic Models 1 (TM1) and 2 (TM2) from the ITU-T recommendation G.8261 \cite{g8261} for cross traffic flows, as specified in Table \ref{table:bgtrafmodels}. 
For inline traffic flows, we consider a third traffic model where packet sizes are uniformly distributed between 64 and 1500 bytes [7].
\begin{table}
\centering
\begin{tabular}{|c|c|c|}
%\cline{3-6}
\hline 
\textbf{Traf. Model} & \textbf{Packet Sizes (Bytes)} & \textbf{\% of Load}\\ 
\hline 
TM1 &  $\{64 , 576 , 1518\}$ & $\{80\%, 5\%, 15\%\}$\\
\hline 
TM2 &  $\{64 , 576 , 1518\}$ & $\{30\%, 10\%, 60\%\}$\\ \hline 
\end{tabular}
\caption{Models for composition of background traffic packets}
\label{table:bgtrafmodels}
\end{table}
We assume that the interarrival times between packets in all background traffic flows follow exponential distributions.
We refer to the percentage of the link capacity consumed by background traffic as the \textit{load}. 
In order to achieve a particular load, we accordingly set the rate parameter of each exponential distribution. 
The queuing delay distributions under a number of network scenarios are plotted in Fig. \ref{fig:pdfs}.
These distributions were obtained empirically using low-level queue simulations. 
Without loss of generality, we assume that fixed delay components of the ETE delays equal
zero, hence the support of the queuing delay distributions always begins at zero in the plots.

The MSE of various estimators under different observation models and network conditions are plotted versus the number of observation pairs/samples $P$  in Figs. \ref{fig:mse_symm_cross} - \ref{fig:mse_asymm_mixed}. 
In order to compute the minimax estimates, the integrals in (\ref{eq:kmodmm}), (\ref{eq:smodmm}) and (\ref{eq:mmodmm}) were replaced with Riemann sums. The spacing between adjacent Riemann summation bins was set to $0.001$ $\mu$s, to ensure that the additional error introduced due to the Riemann sum approximation is small relative to the MSE being computed.
Further, to facilitate comparisons against the LTE synchronization requirement of $1.25$ $\mu$s of synchronization accuracy, the estimation error standard deviation required so that the absolute estimation error lies
under $1.25$ $\mu$s  with a $5$-sigma level of certainty is also plotted over the curves. Here the $5$-sigma level of certainty implies that on average, only about $6$ out of $10^6$ estimates will
have absolute estimation error that exceeds  $1.25$ $\mu$s.
Some key observations we can make from the results are:
\begin{enumerate}[1.]
\item \textit{Performance under symmetric cross traffic (Fig. \ref{fig:mse_symm_cross})}: Here, the gap between the K-model and S-model minimax estimators is negligible under all four loads ($20\%$, $40\%$, $60\%$, $80\%$) considered. 
Hence, under these network scenarios, there is little performance to be gained from the additional knowledge about fixed delays that the K-model provides over the S-model. 
Further, while the sample minimum estimator performs near-optimally at $20$\% load, at higher loads none of the conventional estimation schemes come close to achieving minimax optimal performance. 
In fact, at $80\%$ load, the minimax estimators achieve the LTE synchronization requirement using only about $200$ samples, while about $800$ samples are required by the best conventional estimator. 
\item \textit{Performance under symmetric mixed traffic (Fig. \ref{fig:mse_symm_mixed})}: Here, there is a fair gap between the K-model and S-model minimax estimators under the lower load scenario of Fig. \ref{fig:mse_symm_mixed_low}, which disappears under the higher load scenario of Fig. \ref{fig:mse_symm_mixed_high}. Further, the S-model minimax estimator requires about $50\%$ fewer samples than the best conventional estimator, to achieve the LTE synchronization requirement under the low load scenario.
Interestingly, the sample mean filter performs near-optimally under the high load scenario.
This indicates that the performance gap between the best conventional estimator and the minimax estimator may need to be studied on a per-case basis, and predicting general trends might be difficult.
\item \textit{Performance under asymmetric traffic (Figs. \ref{fig:mse_asymm_cross} and \ref{fig:mse_asymm_mixed})}: Here there is a significant gap between the K-model and S-model minimax estimators, with the K-model minimax estimator requiring about $90\%$ and $22\%$ fewer samples than the S-model estimator in  Fig. \ref{fig:mse_asymm_cross} and Fig. \ref{fig:mse_asymm_mixed}, respectively, in order to meet the LTE synchronization requirement threshold. 
This is expected in cases where the queuing delay distribution in one network direction has significantly lower spread than in the other direction. 
In such cases, the MSE of conventional estimators,
given by (\ref{eq:conv_est_var}), is dominated by either the
first or second term in (\ref{eq:conv_est_var}) if one these variances
is much larger than the other. 
On the other hand, the K-model estimator can utilize knowledge of the fixed delays to base its estimate on only the observations corresponding to the direction with lower variance, thereby eliminating large contributions to its MSE caused by the queuing delay distribution that has higher variance. 

Further, since the M-model estimator can use information from $B$ past blocks to estimate the fixed delay, we expect it to achieve the performance of the K-model estimator in the limiting case where $B\rightarrow \infty$.
In our simulations, we observe that M-model minimax estimator closely approaches the K-model minimax estimator in performance for  fairly small values of $B$ (between $5$ and $20$).
\end{enumerate}

\section{Conclusions}
We derived minimax optimum estimators for a general class of location parameter problems, and applied them to the problem of phase offset estimation under multiple observation models.
In cases where the pdf of the queuing delays are known, minimax estimators can be used to obtain the best possible estimation performance. 
The MSE curves of the minimax estimators can also serve as a design tool for practical synchronization deployments, by providing fundamental limits on POE performance for a given set of network conditions.
Our simulation results indicate that conventional estimators can perform close to optimum in certain low-load scenarios with symmetric queuing delay distributions.
However, optimum estimators appear to provide significant performance benefits in scenarios where the queuing delay distributions are asymmetric, a case that occurs frequently in practice.
The results in this paper could help guide the development of new POE schemes that address synchronization challenges arising in current and future generations of mobile networks.

\appendix
\begin{proof}[Proof of Lemma \ref{lemma:shiftinv_lemma}]
For any shift invariant estimator $g(\vct{x})$, we can show that if $\vct{\theta}_1$ and $\vct{\theta}_2$ are any two values of the parameter vector with $\vct{h} = \vct{\theta}_1 - \vct{\theta}_2$, then
\begin{align}
&\mathcal{R}(g(\vct{x}),\vct{\theta}_1) \nn \\
%&=  \int_{\mathbb{R}^{N}}  [g(\vct{x})-\vct{c}^\tr \vct{\theta}_1]^2 f(\vct{x} |\vct{\theta}_1 ) \tm{d}\vct{x}\\
&= \int_{\mathbb{R}^{N}}  [g(\vct{x})-\vct{c}^\tr (\vct{\theta}_2+\vct{h})]^2 f_0(\vct{x} -\mat{G}(\vct{\theta}_2+\vct{h}) ) \tm{d}\vct{x}\\
&= \int_{\mathbb{R}^{N}}  [g(\vct{x}-\mat{G}\vct{h})-\vct{c}^\tr \vct{\theta}_2]^2 f((\vct{x}-\mat{G}\vct{h}) |\vct{\theta}_2 ) \tm{d}\vct{x}\\
&= \int_{\mathbb{R}^{N}}  [g(\vct{x})-\vct{c}^\tr \vct{\theta}_2]^2 f(\vct{x} |\vct{\theta}_2 ) \tm{d}\vct{x}\\
&\quad\quad  \tm{(using a change of variables)} \nn \\
&= \mathcal{R}(g(\vct{x}),\vct{\theta}_2) 
\end{align}
Hence $g(\vct{x})$ has constant conditional MSE w.r.t. $\vct{\theta}$. Further, using the definitions of the maximum and average MSEs, we obtain $\mathcal{R}(g(\vct{x}),\vct{\theta}) = \mathcal{M}(g(\vct{x}))= \mathcal{B}(g(\vct{x}),p(\vct{\theta}))$.
\end{proof}
$ $\\

\begin{proof} [Proof of Theorem \ref{theorem:MinimaxEstimator}] \label{Appendix:MinimaxProof}
\begin{enumerate}[(i)]
\item It is simple to show that $g^*(\vct{x})$ is shift invariant, since
\begin{align}
g^*(\vct{x} + \mat{G}\vct{h}) &= 
\frac{\int_{\mathbb{R}^{M}}  [\vct{c}^\tra\hat{\vct{\theta}}]  f_0(\vct{x}+ \mat{G}\vct{h} -  \mat{G}\hat{\vct{\theta}}) \ \tm{d}\hat{\vct{\theta}}}
{\int_{ \mathbb{R}^{M}}   f_0(\vct{x} + \mat{G}\vct{h} - \mat{G}\hat{\vct{\theta}}) \ \tm{d}\hat{\vct{\theta}}}\\
&= \frac{\int_{\mathbb{R}^{M}}  [\vct{c}^\tra\hat{\vct{\theta}}]  f(\vct{x}| \hat{\vct{\theta}} - \vct{h} ) \ \tm{d}\hat{\vct{\theta}}}
{\int_{ \mathbb{R}^{M}}   f(\vct{x} | \hat{\vct{\theta}}-\vct{h} ) \ \tm{d}\hat{\vct{\theta}}}\\
&= 
\frac{\int_{\mathbb{R}^{M}}  [\vct{c}^\tra\hat{\vct{\theta}}]  f(\vct{x}| \hat{\vct{\theta}}) \ \tm{d}\hat{\vct{\theta}}}
{\int_{ \mathbb{R}^{M}}   f(\vct{x} | \hat{\vct{\theta}}) \ \tm{d}\hat{\vct{\theta}}} + \vct{c}^\tra \vct{h}\\
&= g^*(\vct{x}) + \vct{c}^\tra\vct{h}
\end{align}

\item For any choice of prior distribution $p(\vct{\theta})$, any estimator $g(\vct{x})$ of $\vct{c}^T \vct{\theta}$ satisfies
\begin{align}
\mathcal{M}(g(\vct{x})) & \ge  \sup_{p(\vct{\theta})}  \mathcal{B}(g(\vct{x}),p(\vct{\theta})) \\
& \ge \sup_{p(\vct{\theta})}   \inf_{\tilde{g}(\vct{x})}\mathcal{B}(\tilde{g}(\vct{x}),p(\vct{\theta})) = \mathcal{B}_{0} \label{eq:minimaxbayeslb}
\end{align}
Further, it can be proved (by contradiction) that $\mathcal{M}(g(\vct{x})) = \mathcal{B}_{0}$ holds if and only if $g(\vct{x})$ is minimax. 
Now consider the estimator $g^*(\vct{x})$ of (\ref{eq:minimaxoptestimator}). From (\ref{eq:minimaxbayeslb}), we already have $\mathcal{M}(g^*(\vct{x})) \ge  \mathcal{B}_{0}$. We shall now show that $\mathcal{B}_{0} \ge \mathcal{M}(g^*(\vct{x})) $ also holds, hence proving that  $\mathcal{B}_{0} = \mathcal{M}(g^*(\vct{x})) $, and thus that $g^*(\vct{x})$ is minimax. 

Consider a sequence of prior distributions $p_i(\vct{\theta})$, each uniformly distributed over a support set $\mat{\Theta}_i$ for $i=1,2,\cdots$, where
\begin{align}
 \mat{\Theta}_i = \left\{ \vct{\theta} : (-i)\cdot \vct{1}_M \le \vct{\theta} \le i \cdot \vct{1}_M\right\}
\end{align}
Here the inequality $(-i)\cdot \vct{1}_M \le \vct{\theta} \le i \cdot \vct{1}_M$ implies that all the elements of the vector $\vct{\theta}$ lie in the interval $[-i,i]$.
Given a prior distribution $p_i(\vct{\theta})$, the estimator that minimizes  $\mathcal{B}(g(\vct{x}),p(\vct{\theta})) $ is the minimum mean square error (MMSE) estimator,
\begin{align}
g_i(\vct{x}) = 
\int_{\vct{\theta} \in  \mat{\Theta}_i}  [\vct{c}^\tra\vct{\theta} ] f_i( \vct{\theta} |\vct{x}) \ \tm{d}\vct{\theta}
\end{align}
where $ f_i( \vct{\theta} |\vct{x})$ represents the posterior pdf 
\begin{align}
f_i( \vct{\theta} |\vct{x}) &= \frac{f(\vct{x}| \vct{\theta}) p_i(\vct{\theta})}{\int_{\tilde{\vct{\theta}} \in \mat{\Theta}_i}   f(\vct{x}|\tilde{\vct{\theta}}) p_i(\tilde{\vct{\theta}})\ \tm{d}\tilde{\vct{\theta}}}
= \frac{f(\vct{x}| \vct{\theta}) }{\int_{\tilde{\vct{\theta}} \in \mat{\Theta}_i}   f(\vct{x}|\tilde{\vct{\theta}}) \ \tm{d}\tilde{\vct{\theta}}}
\end{align}
Hence we can write 
\begin{align}
\mathcal{B}_{0} &= \sup_{p(\vct{\theta})}   \inf_{\tilde{g}(\vct{x})}\mathcal{B}(\tilde{g}(\vct{x}),p(\vct{\theta})) \nn\\
&\ge \inf_{\tilde{g}} \mathcal{B}(\tilde{g}(\vct{x}),p_i(\vct{\theta})) = \mathcal{B}(g_i(\vct{x}),p_i(\vct{\theta})) 
\label{eq:bayesupperlower}
\end{align}
Further, since $f_0(\vct{x})$ has finite support, we have 
\begin{align}
\lim_{i\rightarrow \infty } g_i(\vct{x}) &= \lim_{i\rightarrow \infty } 
\frac{\int_{\vct{\theta} \in \mat{\Theta}_i}  [\vct{c}^\tra\vct{\theta} ] f(\vct{x}| \vct{\theta}) \ \tm{d}\vct{\theta}}
{\int_{\vct{\theta} \in \mat{\Theta}_i}   f(\vct{x}| \vct{\theta}) \ \tm{d}\vct{\theta}} \nn \\
&= \frac{\int_{\vct{\theta} \in \mat{\Theta}(\vct{x})}  [\vct{c}^\tra\vct{\theta} ] f_0(\vct{x} - \mat{G} \vct{\theta}) \ \tm{d}\vct{\theta}}
{\int_{\vct{\theta} \in \mat{\Theta}(\vct{x})}   f_0(\vct{x} - \mat{G} \vct{\theta}) \ \tm{d}\vct{\theta}} 
= g^*(\vct{x})\nn
\end{align}
where
\begin{align}
\mat{\Theta}(\vct{x}) = 
\left\{ 
 \vct{\theta} : 
(\vct{x} - \mat{G} \vct{\theta}) > 0 
\ \ \tm{and}\ \  
(\vct{x} - \mat{G} \vct{\theta}) < L\cdot \vct{1}_{N} 
\right\}
\end{align}
and hence 
\begin{align}
&\lim_{i\rightarrow \infty } \mathcal{B}(g_i(\vct{x}),p_i(\vct{\theta}))  \nn \\
&= \lim_{i \rightarrow \infty } \mathcal{B}(g^*(\vct{x}),p_i(\vct{\theta})) \\
&= \lim_{i \rightarrow \infty } \mathcal{M}(g^*(\vct{x})) \quad \tm{(Since $g^*(\vct{x})$ is shift invariant)}\\
&= \mathcal{M}(g^*(\vct{x})) \label{eq:maplimitminimax}
\end{align}
From (\ref{eq:bayesupperlower}) and (\ref{eq:maplimitminimax}), we obtain $\mathcal{B}_{0} \ge \mathcal{M}(g^*(\vct{x}))$, hence completing the proof.
\item Since $g^*(\vct{x})$ is shift invariant, from Lemma \ref{lemma:shiftinv_lemma} we have 
$\mathcal{R}(g(\vct{x}),\vct{\theta}) = \mathcal{M}(g(\vct{x}))$.
Further, since all shift invariant estimators have constant conditional MSE, and $g^*(\vct{x})$ minimizes $\mathcal{M}(g(\vct{x}))$, it also minimizes $\mathcal{R}(g(\vct{x}),\vct{\theta})$ for every value of $\vct{\theta}$.
\item We shall prove that $g^*(\vct{x})$ is unbiased by contradiction. Assume $g^*(\vct{x})$ is biased. 
Since $g^*(\vct{x})$ is shift invariant, its bias should be constant with respect to $\vct{\theta}$. Let 
\begin{align}
\beta = \Ebr{ \left[ g^*(\vct{x}) - \vct{c}^\tra \vct{\theta } \right] \ |\ \vct{\theta}} = \Ebr{g^*(\vct{x})\ |\ \vct{\theta}=_M}
\end{align}
denote this constant bias. Now consider a new estimator of $\vct{c}^\tra \vct{\theta}$, given as
\begin{align}
\hat{g}(\vct{x}) = g^*(\vct{x}) - \beta 
\end{align}
It is easy to show that $\hat{g}(\vct{x})$ is also shift invariant. 
Further,
\begin{align}
\mathcal{M}\left(\hat{g}(\vct{x})\right) &=
\Ebr{\left[\hat{g}(\vct{x}) - \vct{c}^T\vct{\theta}\right]^2 \ |\ \vct{\theta}}\\
%&= \Ebr{\left[g^*(\vct{x}) - \beta  - \vct{c}^T\vct{\theta}\right]^2 \ |\ \vct{\theta}} \\
&= \Ebr{\left[g^*(\vct{x}) - \vct{c}^T\vct{\theta}\right]^2 \ |\ \vct{\theta}} \\
&\ \  - 2\beta  \Ebr{\left[g^*(\vct{x}) - \vct{c}^T\vct{\theta}\right] \ |\ \vct{\theta}} + \beta^2\\
&= \mathcal{M}\left(g^*(\vct{x})\right)  - \beta^2  \ \  < \ \  \mathcal{M}\left(g^*(\vct{x})\right)  
\end{align}
However, this is impossible since $g^*(\vct{x})$ has already been shown to minimize the maximum MSE. Thus, the assumption that $g^*(\vct{x})$ is biased is incorrect. 
\end{enumerate}
\end{proof}
$ $ \\
\begin{proof}[Proof of Lemma \ref{lemma:linearsummation}]
Using theorem \ref{theorem:MinimaxEstimator}, we obtain 
\begin{align}
g_i^*(\vct{x}) = 
\frac{\int_{\mathbb{R}^{M}}  \hat{\theta}_i f(\vct{x}| \hat{\vct{\theta}}) \ \tm{d}\hat{\vct{\theta}}}
{\int_{ \mathbb{R}^{M}}   f(\vct{x} | \hat{\vct{\theta}}) \ \tm{d}\hat{\vct{\theta}}}
\end{align}
and
\begin{align}
g^*(\vct{x}) &= 
\frac{\int_{\mathbb{R}^{M}}  [\vct{c}^\tra\hat{\vct{\theta}}]  f(\vct{x}| \hat{\vct{\theta}}) \ \tm{d}\hat{\vct{\theta}}}
{\int_{ \mathbb{R}^{M}}   f(\vct{x} | \hat{\vct{\theta}}) \ \tm{d}\hat{\vct{\theta}}} \\
&=\frac{\sum_{i=1}^{M} c_i \int_{\mathbb{R}^{M}}  \theta_i  f(\vct{x}| \hat{\vct{\theta}}) \ \tm{d}\hat{\vct{\theta}}}
{\int_{ \mathbb{R}^{M}}   f(\vct{x} | \hat{\vct{\theta}}) \ \tm{d}\hat{\vct{\theta}}} = \sum_{i=1}^{M} c_i g_i^*(\vct{x})
\end{align}
hence proving the theorem.
\end{proof}

\begin{proof}[Proof of Theorem \ref{theorem:maxmselinearity}]
Consider the problem of estimating $\vct{c}^\tra \vct{\theta}$ given all the observations $\vct{x}_1, \cdots, \vct{x}_K$. It is easy to show that this problem is a vector location parameter problem as defined in Section \ref{sec:minimaxest}. Hence, a minimax optimum estimator for this problem can be obtained via Theorem \ref{theorem:MinimaxEstimator}.
Denote this minimax estimator as $h^*(\vct{x}_1, \cdots, \vct{x}_K)$. 
Further, note that $h^*(\vct{x}_1, \cdots, \vct{x}_K)$ and $g^*(\vct{x})$ are both estimators of $\vct{c}^\tra \vct{\theta}$, but $h^*(\vct{x}_1, \cdots, \vct{x}_K)$ has more information available to it, since $\sum_{k=1}^{K}\vct{x}_k = \vct{x}$.
Hence, we must have
\begin{align}
\mathcal{M}(g^*(\vct{x})) \ge \mathcal{M}(h^*(\vct{x}_1, \cdots, \vct{x}_K)) \label{eq:linsumpart0}
\end{align}
Further, using Lemma \ref{lemma:linearsummation}, it can be shown that
\begin{align}
h^*(\vct{x}_1, \cdots, \vct{x}_K) = \sum_{k=1}^{K} h_k^*(\vct{x}_k)
\end{align}
Since minimax estimators are shift invariant, we can write
\begin{align}
&\mathcal{M}(h^*(\vct{x}_1, \cdots, \vct{x}_K)) \\
&= \sup_{\vct{\theta} \in \mathbb{R}^N} \Ebr{\left[ h^*(\vct{x}_1, \cdots, \vct{x}_K) - \vct{c}^T \vct{\theta}\right]^2 \Big|\ \vct{\theta}}\\
&= \Ebr{\left[ h^*(\vct{x}_1, \cdots, \vct{x}_K) \right]^2 \Big|\ \vct{\theta} = \vct{0}_M} 
\end{align}
This can be further simplified as
\begin{align}
& \mathcal{M}(h^*(\vct{x}_1, \cdots, \vct{x}_K))\\
 &= \Ebr{\left[\sum_{k=1}^{K} h_k^*(\vct{x}_k) \right]^2 \Big|\ \vct{\theta}_1 = \vct{0}_M, \cdots, \vct{\theta}_K = \vct{0}_M} \\
&= \sum_{k=1}^{K}\Ebr{ [h_k^*(\vct{x}_k)]^2\Big|\ \vct{\theta}_k = \vct{0}_M}  \nn 
 + \sum_{k_1=1}^{K} \sum_{
\begin{subarray}
k k_2=1\\
k_2 \ne k_1
\end{subarray}}^{K} \Bigg[ \\
& \quad \Ebr{ h_{k_1}^*(\vct{x}_{k_1}) h_{k_2}^*(\vct{x}_{k_2})\ \Big|\ \vct{\theta}_{k_1} = \vct{0}_M , \vct{\theta}_{k_2} = \vct{0}_M} \Bigg] 
\end{align}
We note that $h_1^*(\vct{x}_1), \cdots, h_K^*(\vct{x}_K)$ are all mutually independent conditioned on the unknown parameters, due to our initial assumption that $\vct{x}_1, \cdots, \vct{x}_K$ are mutually independent as per (\ref{eq:indepdenceassumption}). Hence, we obtain
\begin{align}
& \mathcal{M}(h^*(\vct{x}_1, \cdots, \vct{x}_K))\\
&= \sum_{k=1}^{K} \Ebr{[h_k^*(\vct{x}_k)]^2\Big|\ \vct{\theta}_{k} = \vct{0}_M}  \nn \\
&\ \ + \sum_{k_1=1}^{K} \sum_{
\begin{subarray}
k k_2=1\\
k_2 \ne k_1
\end{subarray}}^{K} \Ebr{ h_{k_1}^*(\vct{x}_{k_1})\Big|\ \vct{\theta}_{k_1} = \vct{0}_M} \nn \\
&\qquad\qquad \qquad \qquad \cdot  \Ebr{ h_{k_2}^*(\vct{x}_{k_2})\Big|\ \vct{\theta}_{k_2} = \vct{0}_M}  \label{eq:linsumpart1}
\end{align}
Since $h_{k}^*(\vct{x}_{k})$ is a minimax estimator, it is unbiased
and shift invariant according to Theorem \ref{theorem:MinimaxEstimator}, and hence 
\begin{align}
& \Ebr{ h_{k}^*(\vct{x}_{k})\Big|\ \vct{\theta} = \vct{0}_M} = 0 \ , \label{eq:linsumpart2}\\
&\Ebr{ [h_{k_1}^*(\vct{x}_{k_1})]^2 \Big|\ \vct{\theta}_{k_1} = \vct{0}_M} =  \mathcal{M}(h_k^*(\vct{x}_k)) \label{eq:linsumpart3}
\end{align}
From (\ref{eq:linsumpart1}), (\ref{eq:linsumpart2}) and (\ref{eq:linsumpart3}) we obtain 
\begin{align}
\mathcal{M}(h^*(\vct{x}_1, \cdots, \vct{x}_K)) = \sum_{k=1}^{K}  \mathcal{M}(h_k^*(\vct{x}_k)) \label{eq:linsumpart4}
\end{align}
Finally, from (\ref{eq:linsumpart0}) and (\ref{eq:linsumpart4}), we obtain 
\begin{align}
\mathcal{M}(g^*(\vct{x})) \ge \sum_{k=1}^{K}  \mathcal{M}(h_k^*(\vct{x}_k))
\end{align}
hence concluding the proof.
\end{proof}

\begin{proof}[Proof of Corollary \ref{theorem:msevsswitches}]
We shall prove this corollary by applying Theorem \ref{theorem:maxmselinearity} to POE under the S-model.
To this end, assume $N=KL$, where $K$ and $L$ are both integers.
For the $N$-node network, assuming $P$ pairs of timestamp differences are collected per the S-model, the observation vector can be written similar to  (\ref{eq:standard_obsmodel_vct}), as
\begin{align}
\vct{y} = d\vct{1}_{2P} + \delta \vct{e} + \vct{w} 
\end{align}
where $d$ and $\delta$ represent the unknown fixed delay and phase offset, while $\vct{w}$ represents the $2P\times 1$ vector of queuing delays. 

Now suppose that the cascade of $N=KL$ nodes is split into $K$ smaller cascades, each consisting of $L$ nodes.
Each cascade of $L$ nodes is placed between a new master-slave pair, resulting in $K$ new networks (see example in Fig. \ref{fig:MinimaxMSE_networks}).
\begin{figure}
\centering
\subfloat[Original network] {\includegraphics[width=0.75\columnwidth]{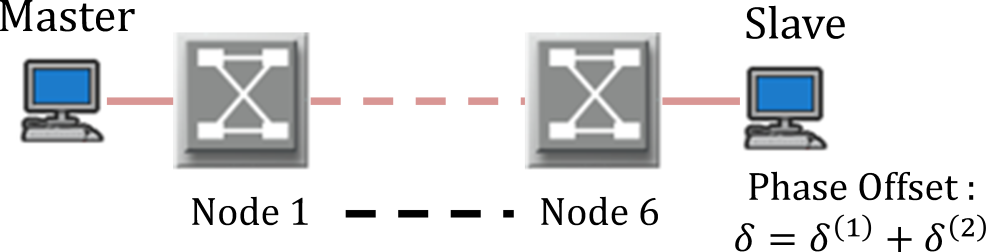}\label{fig:MinimaxMSE_6node_network}}\\
\subfloat[Networks obtained after splitting] {\includegraphics[width=0.75\columnwidth]{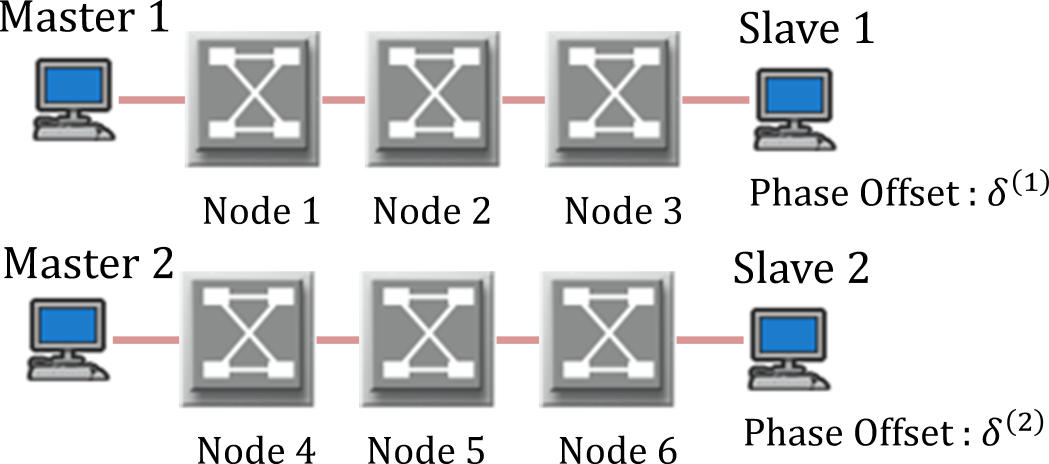}\label{fig:MinimaxMSE_2x3node_network}}
\caption{Example of a network containing $N=6$  intermediate nodes, that has been split into $K=2$ networks, each containing $L=3$ intermediate nodes.}\label{fig:MinimaxMSE_networks}
\end{figure}
Let the phase offset of the slave  in the $k^\tm{th}$ network be $\delta^{(k)}$, and let the fixed delay in the $k^\tm{th}$ network be $d^{(k)}$.  
Assume that the phase offsets and fixed delays satisfy the relation 
\begin{align}
\sum_{k=1}^{K}\delta^{(k)} = \delta , \ \ \sum_{k=1}^{K} d^{(k)} = d  \label{eq:dkdeltaksum}
\end{align}
Assuming that $P$ observation pairs are collected per the S-model, the observation vector for each $L$-node network can be written, similar to (\ref{eq:standard_obsmodel_vct}), as 
\begin{align}
\vct{y}^{(k)} = d^{(k)} \vct{1}_{2P} + &\delta^{(k)} \vct{e} + \vct{w}^{(k)} 
\end{align}
for $k=1,\cdots,K$. Here $\vct{w}^{(k)}$ represents the $2P\times 1$ vector of queuing delays in the $k^\tm{th}$ network. 
Since the single-node queuing delays across all nodes are identically distributed, the minimax MSE associated with estimating $\delta^{(k)}$ from $\vct{y}^{(k)}$ will be identical, and equal  $\minimaxmse(L)$ in all the $L$-node networks. 
Note that due to the shift invariance of the minimax estimator and the result
in Lemma \ref{lemma:shiftinv_lemma}, the minimax MSE will remain unchanged regardless
of the assumption in (\ref{eq:dkdeltaksum}), since the minimax MSE does not depend on the value of $\delta^{(k)}$ or $d^{(k)}$.
In order to apply Theorem \ref{theorem:maxmselinearity}, we note that the queuing delay vector under the $KL$ node network can be written as sum of the queuing delay vectors under each $L$-node network, i.e. $\vct{w} = \sum_{k=1}^{K}\vct{w}^{(k)}$. 
Further, due to the assumption that the single-node queuing delays are mutually independent, we have
\begin{align}
& f(\vct{y}^{(k_1)},\vct{y}^{(k_2)}|\delta^{(k_1)}, d^{(k_1)}, \delta^{(k_2)}, d^{(k_2)} ) \nn \\ &=f(\vct{y}^{(k_1)}|\delta^{(k_1)}, d^{(k_1)}) f(\vct{y}^{(k_2)}|\delta^{(k_2)}, d^{(k_2)}) 
 .
\end{align}
Due to the assumption in (\ref{eq:dkdeltaksum}), we also have 
\begin{align}
\vct{y} &= d\vct{1}_{2P} + \delta \vct{e} + \vct{w} \\
&=  \left[\sum_{k=1}^{K} d^{(k)}\right] \vct{1}_{2P} + \left[\sum_{k=1}^{K}\delta^{(k)}\right] \vct{e} +  \left[\sum_{k=1}^{K}\vct{w}^{(k)}\right]\\
&=  \sum_{k=1}^{K} \left[ d^{(k)} \vct{1}_{2P} + \delta^{(k)} \vct{e} +  \vct{w}^{(k)} \right]\ \ 
=\ \sum_{k=1}^{K}\vct{y}^{(k)}
\end{align}
Noting the similarity in the relationships between $\vct{y}$, $\vct{y}^{(k)}$ and the vectors $\vct{x}$, $\vct{x}_k$ in Theorem \ref{theorem:maxmselinearity}, we can apply Theorem \ref{theorem:maxmselinearity} to obtain the relation
\begin{align}
\minimaxmse(KL) \ge K \minimaxmse(L) 
\end{align}
which concludes the proof.	
\end{proof}

\begin{figure}[h]
\centering
\subfloat[TM1, $20-40\%$ Load] {\label{fig:pdf_TM1_20_40} \includegraphics[width=\perfcurveswidth]{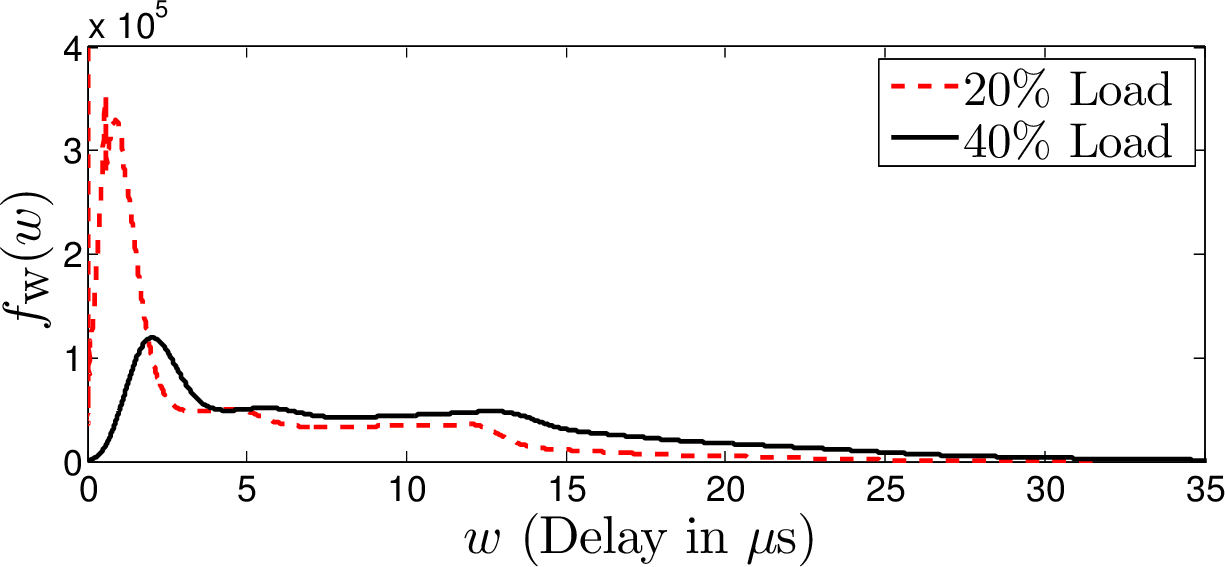} } \\ 
\subfloat[TM1, $60-80\%$ Load] {\label{fig:pdf_TM1_60_80} \includegraphics[width=\perfcurveswidth]{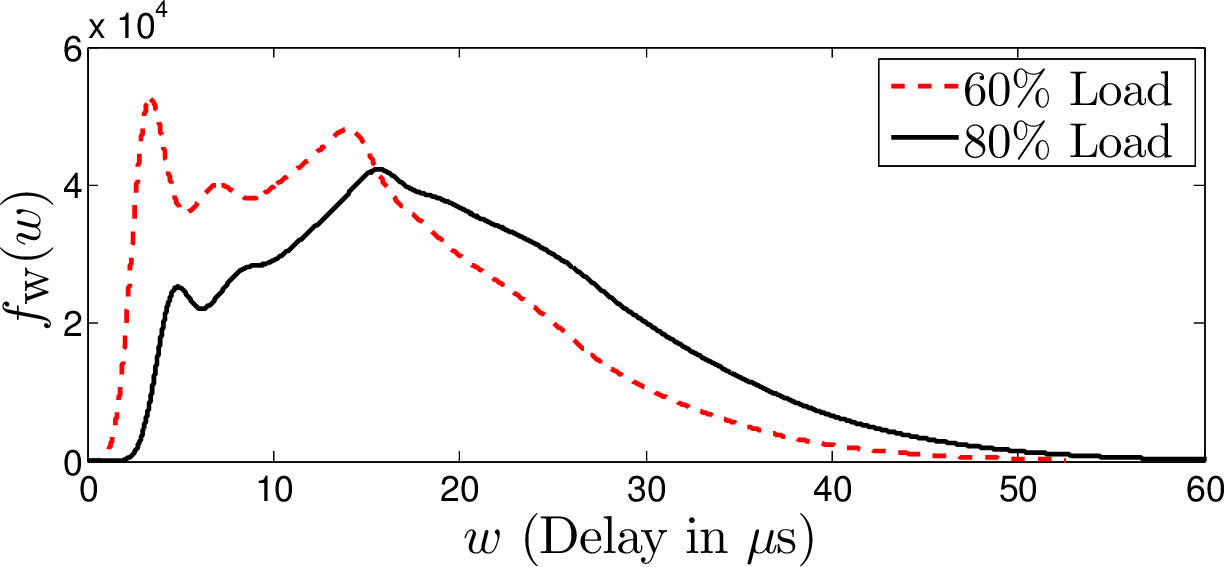} } \\ 
\subfloat[Mixed Traffic, with TM2 for cross traffic and uniform packet size distribution for inline traffic.] {\label{fig:pdf_Mixed}\includegraphics[width=\perfcurveswidth]{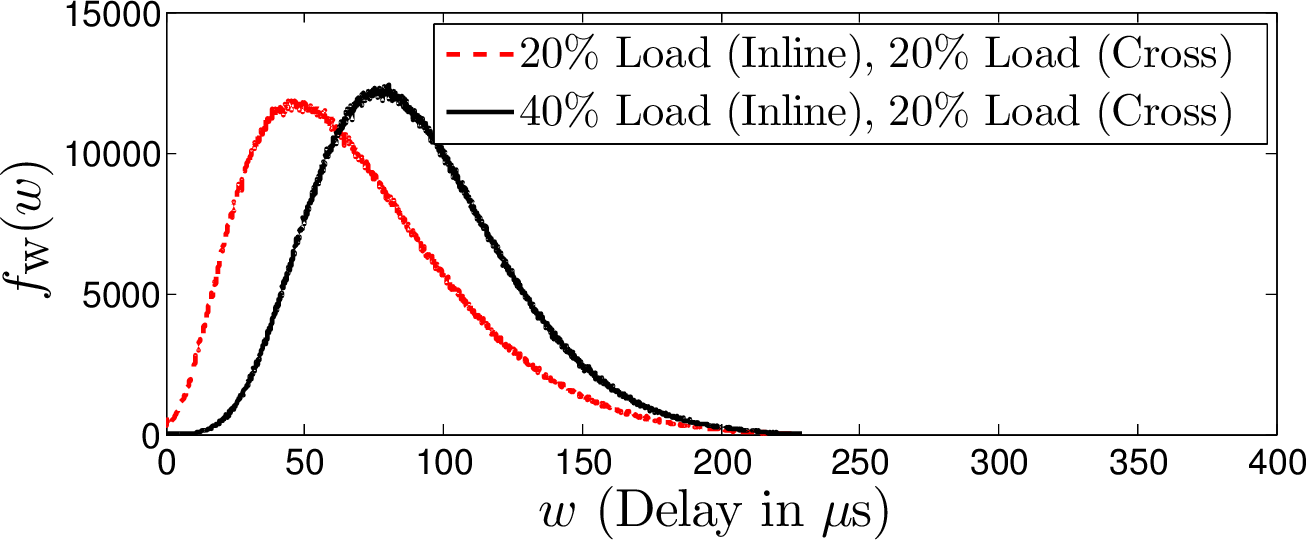} }
\caption{Plots of queuing delay distributions under different network conditions}
\label{fig:pdfs}
\end{figure}
\begin{figure}
\centering
\subfloat[TM1, 20\% Load] {\includegraphics[width=\perfcurveswidth]{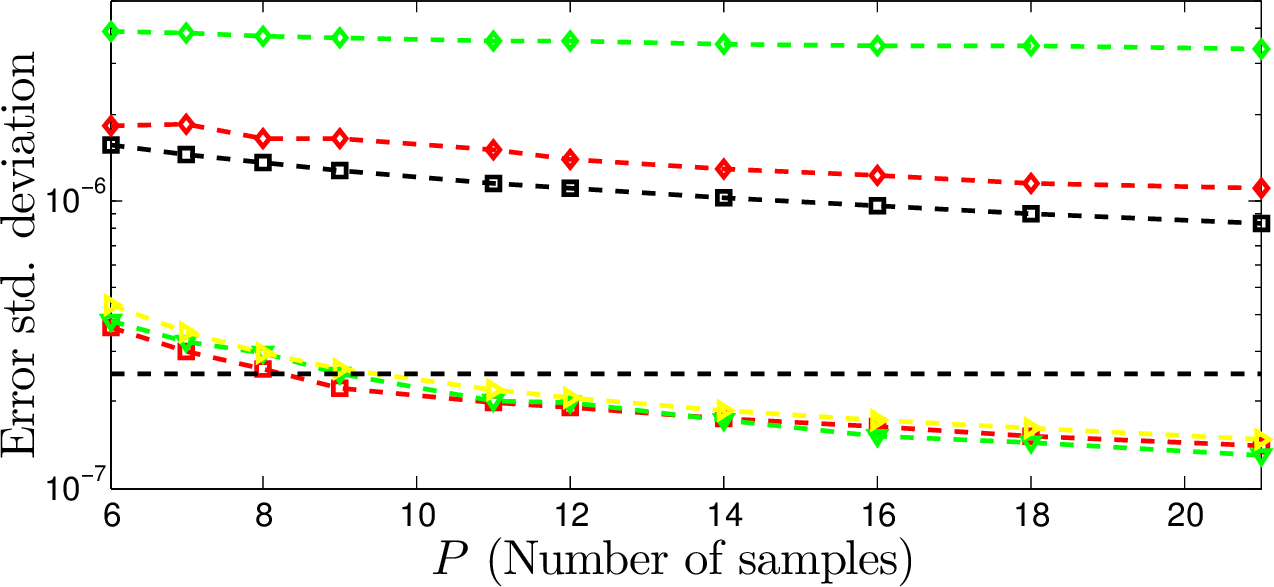} } \\ 
\subfloat[TM1, 40\% Load] {\includegraphics[width=\perfcurveswidth]{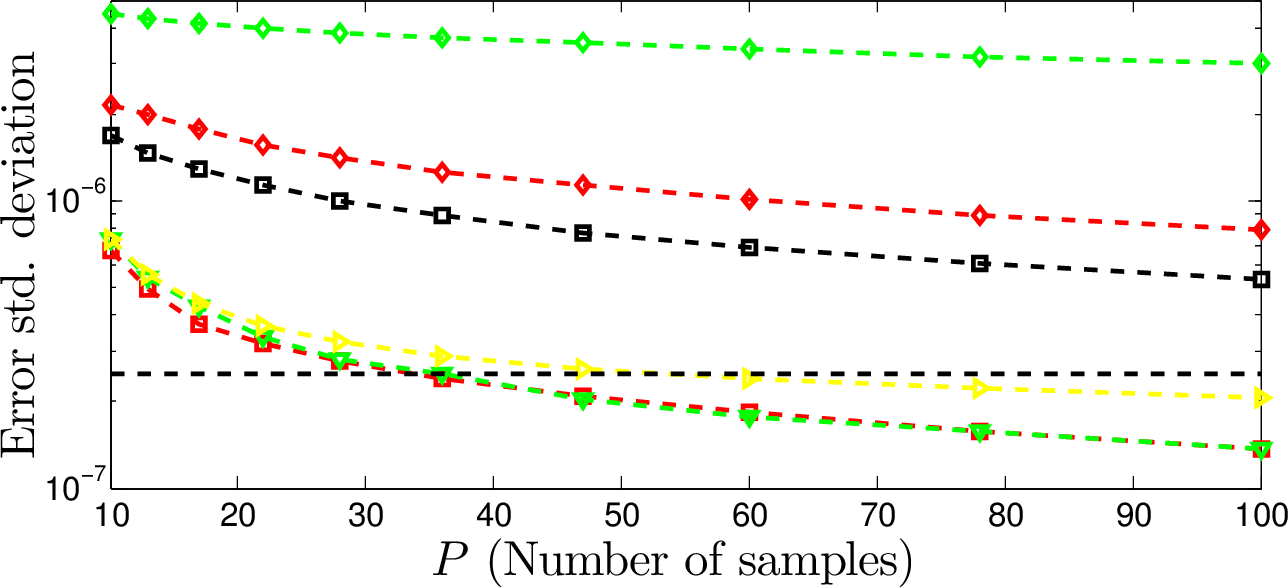}}\\
\subfloat[TM1, 60\% Load] {\includegraphics[width=\perfcurveswidth]{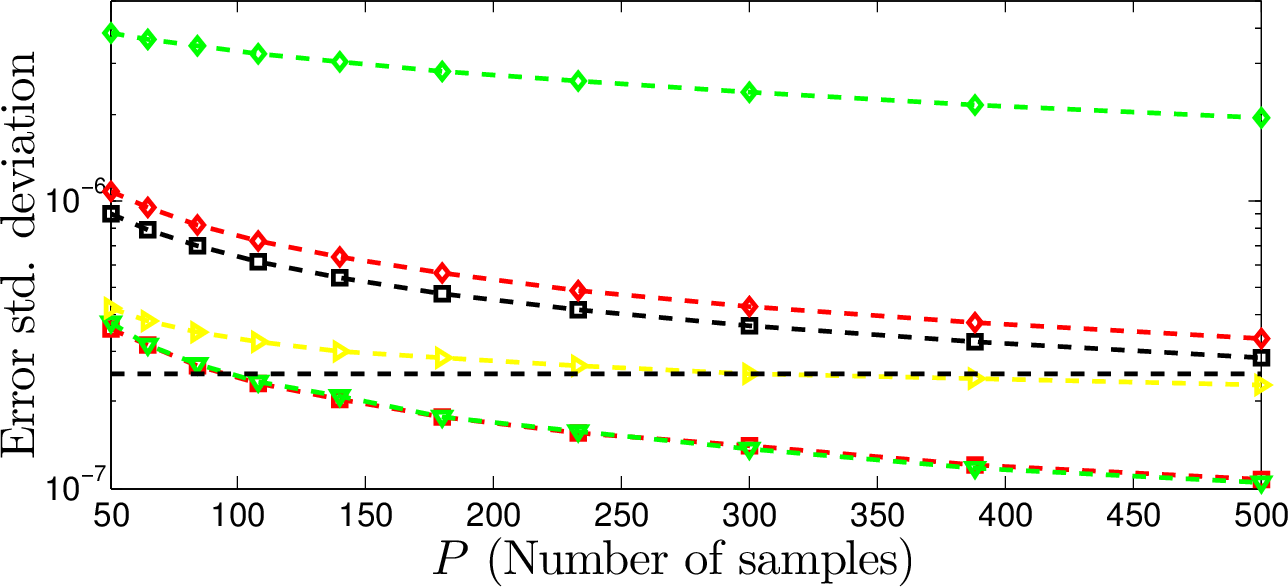} } \\ 
\subfloat[TM1, 80\% Load] {\includegraphics[width=\perfcurveswidth]{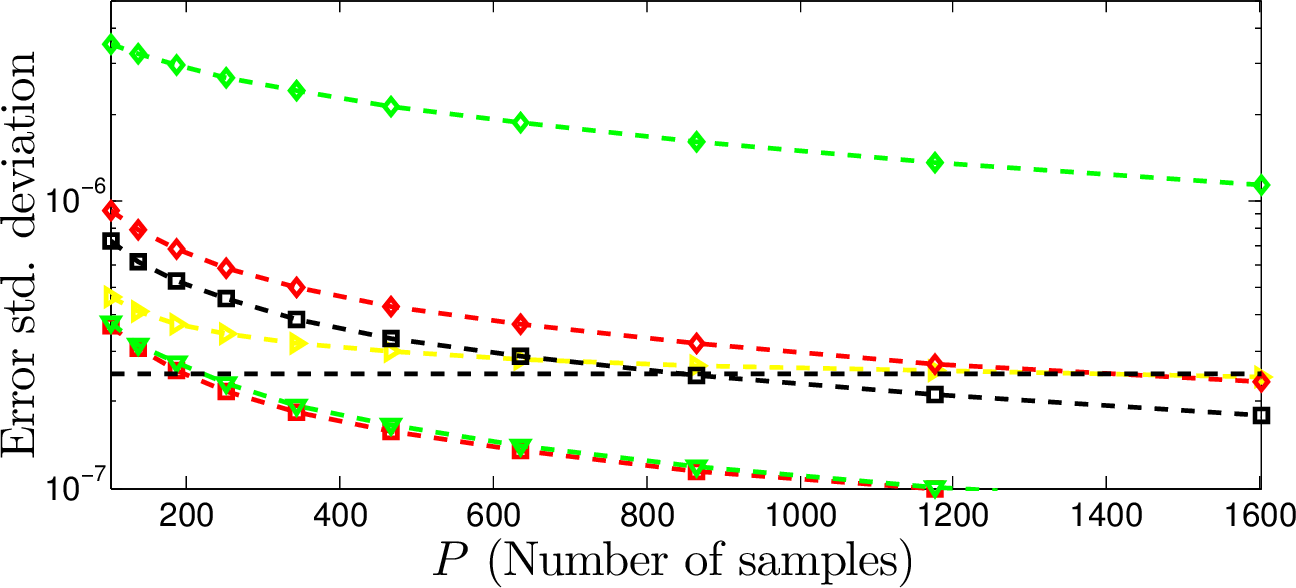}}\\
\subfloat{\includegraphics[width=\legendwidth]{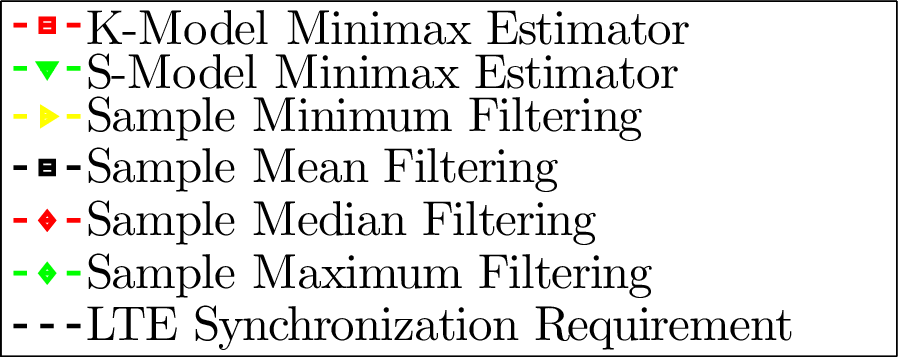}}
\caption{Performance comparison of different estimators under symmetric cross traffic.}
\label{fig:mse_symm_cross}
\end{figure}
\begin{figure}
\centering
\subfloat[20\% Load (Inline), 20\% Load (Cross, TM2)] {\includegraphics[width=\perfcurveswidth]{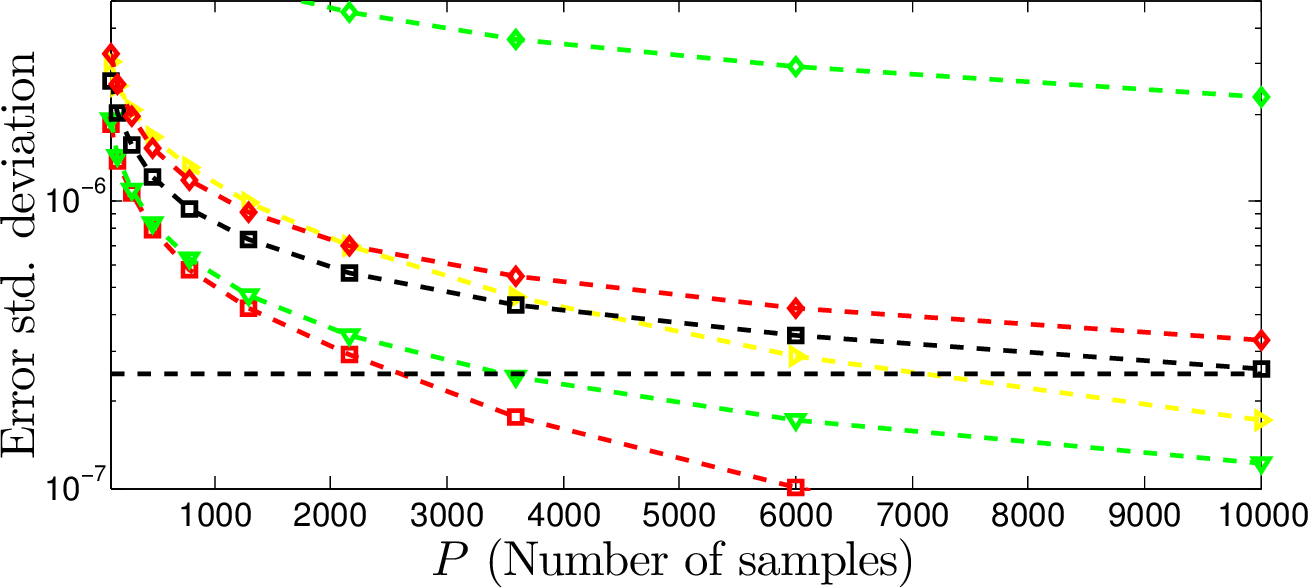}\label{fig:mse_symm_mixed_low}}\\
\subfloat[40\% Load (Inline), 20\% Load (Cross, TM2)] {\includegraphics[width=\perfcurveswidth]{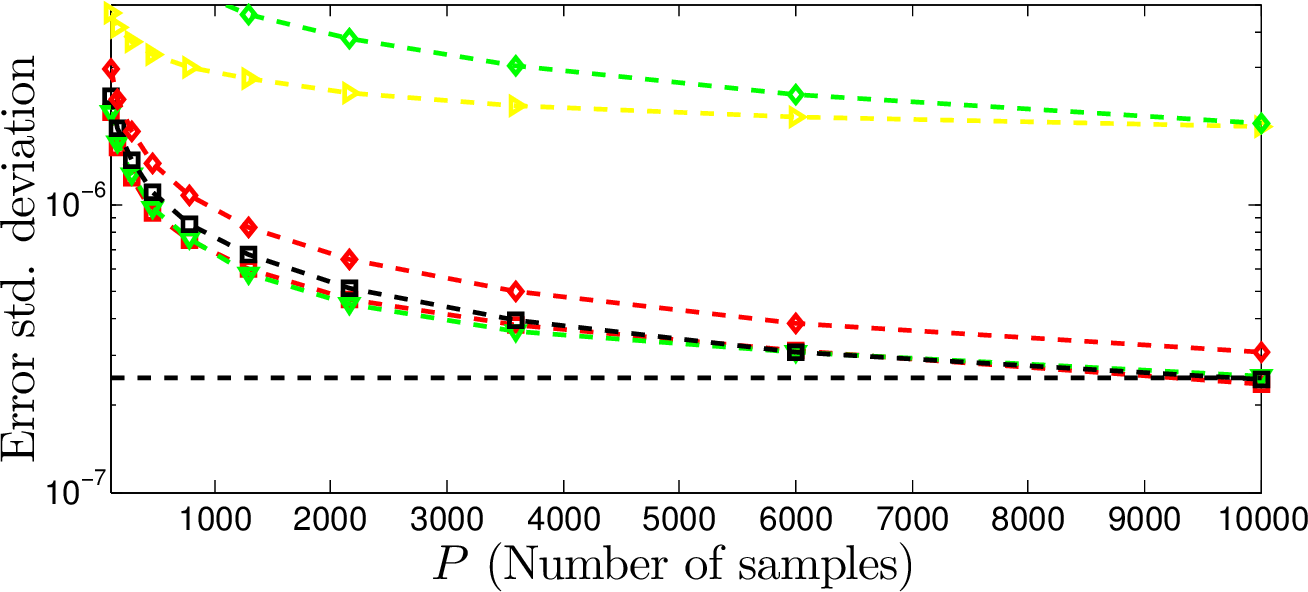}\label{fig:mse_symm_mixed_high}}\\
\subfloat{\includegraphics[width=\legendwidth]{Legend1.png}}
\caption{Performance comparison of different estimators under symmetric mixed traffic.}
\label{fig:mse_symm_mixed}
\end{figure}
\begin{figure}
\centering
\subfloat {\includegraphics[width=\perfcurveswidth]{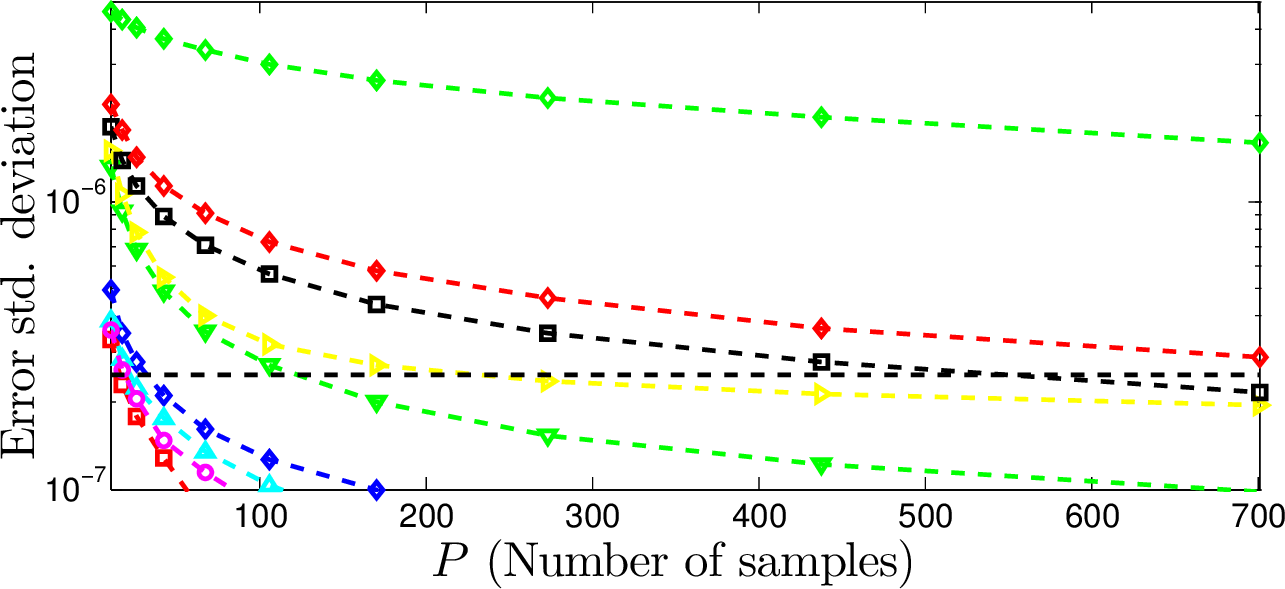}}\\
\subfloat{\includegraphics[width=\legendwidth]{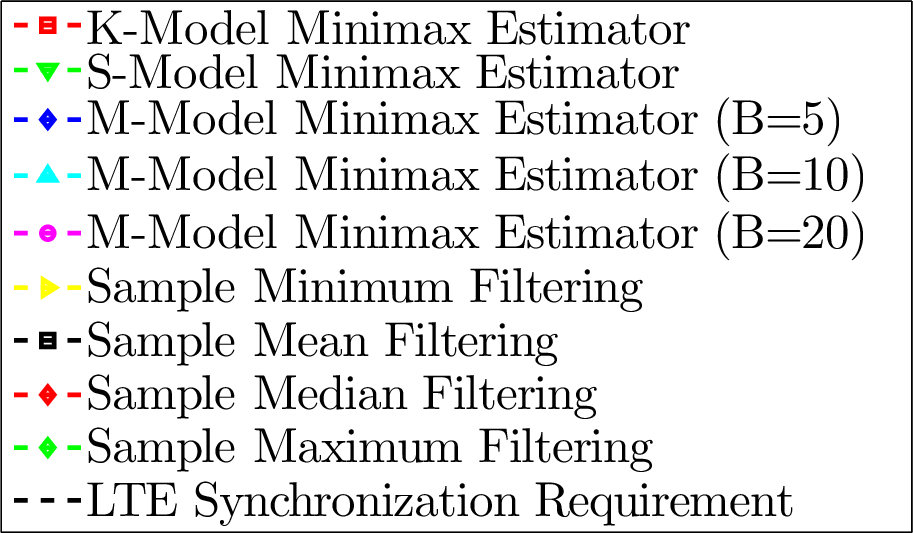}}
\caption{Performance comparison of different estimators under asymmetric cross traffic. Forward path: 80\% Load (TM1), Reverse path: 20\% Load (TM1).}
\label{fig:mse_asymm_cross}
\end{figure}
\begin{figure}
\centering
\subfloat{\includegraphics[width=\perfcurveswidth]{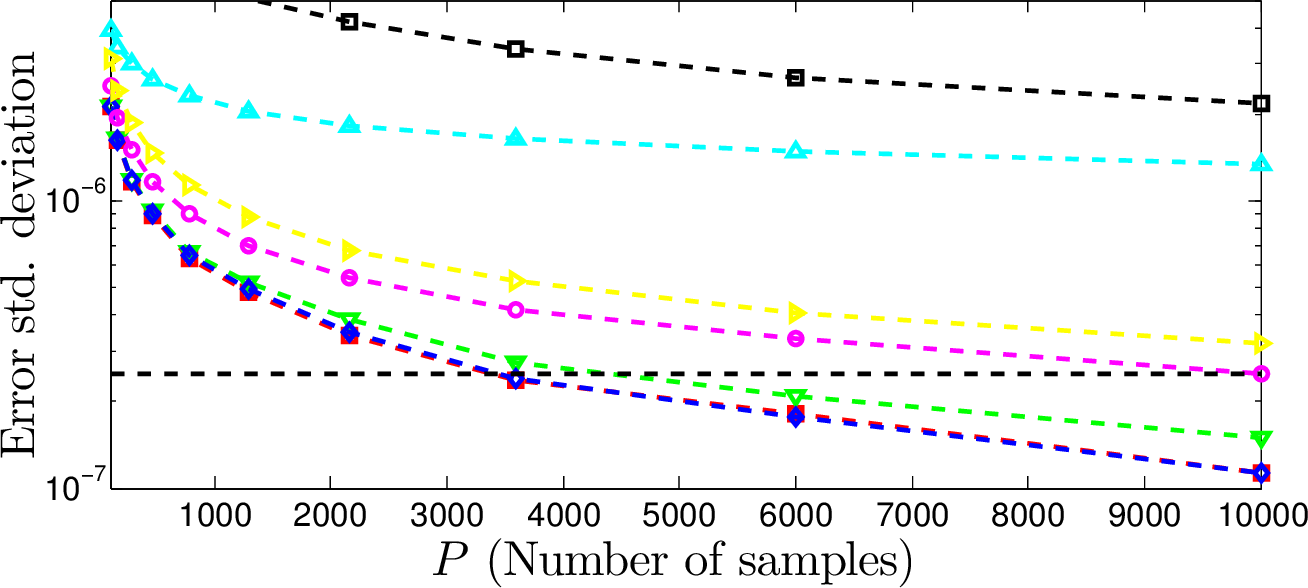}}\\
\subfloat{\includegraphics[width=\legendwidth]{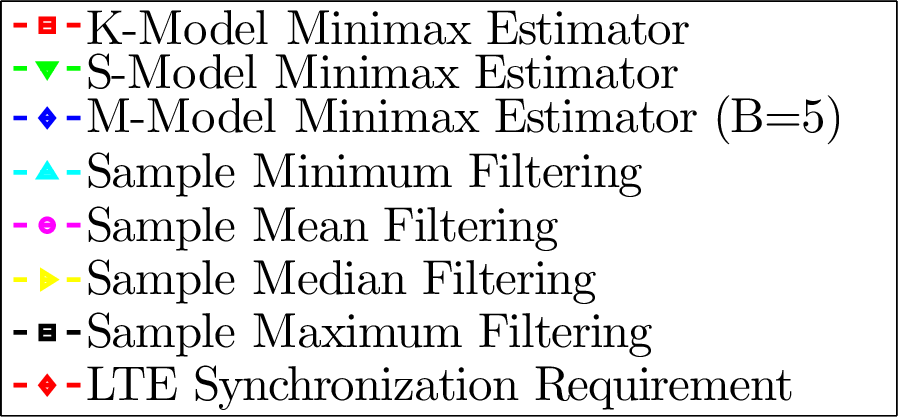}}
\caption{Performance comparison of different estimators under asymmetric mixed traffic. Traffic models used are TM2 for cross traffic and uniform packet size distribution for inline traffic. 
The forward path has $40\%$ inline load and $20\%$ cross load, while the reverse path has $20\%$ inline  load and $20\%$ cross load.}
\label{fig:mse_asymm_mixed}
\end{figure}

%{\small
%\bibliographystyle{ieeetr}
%\bibliography{bibliography}
%}

\end{document}